\algnewcommand{\LeftComment}[1]{\Statex \(\triangleright\) #1} 
\DeclareRobustCommand{\rchi}{{\mathpalette\irchi\relax}}
\newcommand{\irchi}[2]{\raisebox{\depth}{$#1\chi$}}
\definecolor{Granata}{rgb}{0.64,0,0} 
\definecolor{QuasiBlue}{rgb}{0.03,0.3,0.72}
\newcommand{\mc}[1]{\mathcal{#1}}
\newcommand{\mb}[1]{\mathbb{#1}}
\newcommand{\N}{N}
\newcommand{\bk}{{{k_{\rm u}}}}
\newcommand{\poa}{{\rm PoA}}
\newcommand{\nashe}[1]{{\rm NE}({#1})}
\newcommand{\nashealg}[1]{\text{\rm NE}^{\rm alg}({#1})}
\newcommand{\gfk}{\mc{G}_f^k}
\newcommand{\cfbk}{\rchi(f,\bk)} 
\newcommand{\be}{\begin{equation}}
\newcommand{\ee}{\end{equation}}
\newcommand{\p}{{k_{\rm d}}}
\newcommand{\fr}{f'}
\newcommand{\fs}{f^\star_\bk}
\newcommand{\m}{{\ell}}
\newcommand{\ri}{r}
\newcommand{\falg}{{f_\m^{\rm alg}}}
\newcommand{\falgkinfi}{f^{\rm alg}_{x^\infty_r}} 
\renewcommand{\ae}{a_e}
\newcommand{\ke}{k_e}
\newcommand{\km}{k_M}
\newcommand{\A}[1]{\mathcal{A}^{#1}}
\renewcommand{\a}[1]{a^{#1}}
\newcommand{\aopt}{a_{\rm o}}
\newcommand{\mR}{\mc{R}}
\newcommand{\Gf}{G}
\newcommand{\cdotshort}{\!\cdot\!}
\newcommand{\cfk}{\rchi(f,k)}
\newcommand{\poafk}{\poa(f,k)} 
\newcommand{\poafsk}{\poa(f^\star_k,k)}
\newlength\algbrake
\newtheorem{definition}{Definition}
\newtheorem{thm}{Theorem}
\newtheorem{prop}{Proposition}
\newtheorem{lemma}{Lemma}
{\theoremstyle{remark}
\newtheorem*{rmk*}{Remark}}
\begin{document}

\title{\LARGE \bf
The Importance of System-Level Information in Multiagent Systems Design: Cardinality and Covering Problems
}

\author{Dario~Paccagnan$^{1}$~and~Jason~R.~Marden$^{2}$%
\thanks{This research was supported by SNSF Grant \#P1EZP2-172122 and by AFOSR Grant \#FA9550-12-1-0359, ONR Grant \#N00014-15-1-2762, NSF Grant \#ECCS-1351866.}
\thanks{$^{1}$D. Paccagnan is with the Department of Electrical and Computer Engineering, University of California, Santa Barbara, USA and with the Automatic Control Laboratory, Swiss Federal Institute of Technology, Zurich, Switzerland. Email: {\tt dariop@control.ee.ethz.ch}.}
\thanks{$^{2}$J.\,R. Marden is with the Department of Electrical and Computer Engineering, University of California, Santa Barbara, USA. Email: {\tt jrmarden@ece.ucsb.edu.}}
\thanks{
A much abbreviated conference version of this work appeared in \cite{Allerton17Pacca}.}
}

\maketitle

\begin{abstract}
A fundamental challenge in multiagent systems is to design local control algorithms to ensure a desirable collective behaviour. The information available to the agents, gathered either through communication or sensing, naturally restricts the achievable performance. Hence, it is fundamental to identify what piece of information is valuable and can be exploited to design control laws with enhanced performance guarantees. This paper studies the case when such information is uncertain or inaccessible for a class of submodular resource allocation problems termed covering problems. In the first part of this work we pinpoint a fundamental risk-reward tradeoff faced by the system operator when conditioning the control design on a valuable but uncertain piece of information, which we refer to as the cardinality, that represents the maximum number of agents that can simultaneously select any given resource. Building on this analysis, we propose a distributed algorithm that allows agents to learn the cardinality while adjusting their behaviour over time. This algorithm is proved to perform on par or better to the optimal design obtained \mbox{when the exact cardinality is known a priori}.
\end{abstract}

\IEEEpeerreviewmaketitle

\section{Introduction}
\IEEEPARstart{S}{everal} social and engineering systems can be thought of as a collection of multiple subsystems or agents, each taking local decisions in response to available information. A central goal in this field is to design control algorithms for the individual subsystems to ensure that the collective behaviour is desirable with respect to a global objective. 
Achieving this goal is particularly challenging because of the restriction on the information available to each agent and to the large scale of typical systems. Examples include, but are not limited to, power grid networks \cite{siljak2011decentralized}, charging of electric vehicles \cite{Paccagnan16}, transportation network \cite{BrownMarden16}, task assignment problems \cite{MardenWierman13}, sensor allocation \cite{MartinezCortez07}, robotic networks \cite{PavoneArsie11}. A considerable bulk of the research has focused on the design of local control algorithms in a framework where the information at agents' disposal is itself a fixed datum of the problem. A non exhaustive list includes \cite{LangbortChandra04,BamiehPaganini02} and references therein. 
Understanding the impact of information availability on the achievable performances is a seemingly important but less tracked problem \cite{Marden14, Gairing09, LangbortGupta09}. 

Of particular interest is to recognise what supplementary piece of information could coordinate the agents to improve the system performance, and further how to incorporate this additional knowledge into a control algorithm. 
It is important to highlight that providing each agent with \emph{all} the information available to the system is in principle beneficial, but not necessarily desirable. Indeed, the communication costs associated with propagating additional information through the system might overcome the performance gains that the knowledge of additional information gives.
Therefore, the previous question has to be understood within this context. Ideally, one is interested in a piece of information that gives a significant performance enhancement, and is simple to obtain.
Loosely speaking, we measure the value of an additional piece of information with the performance gain that the best controller can offer, using that supplementary piece of information.

Relative to the class of resource allocation problems termed covering problems, \cite{Gairing09,Marden17toappear} show that the maximum number of agents that can simultaneously select a resource (which we term \emph{cardinality}) constitutes a valuable piece of information. More precisely, when the system operator is aware of the cardinality of the problem, he can devise distributed algorithms with improved performance guarantees. Nevertheless, the knowledge of the exact cardinality is in many applications uncertain, not available or may require excessive communication  to be determined. 
Following this observation, a system operator would like to understand how to operate when the knowledge of the exact cardinality is not available. What is the risk associated with using the wrong cardinality in the control design? What is the reward for using the correct one? Further and more fundamental: when the cardinality is not available at all, can the agents \emph{learn} it while simultaneously adjusting their behaviour?

The paper proceeds by considering covering problems
\cite{Vazirani01, Feige98}, a class of resource allocation problems where agents are assigned to resources in order to maximise the total value of covered items. Examples include vehicle-target assignment problems \cite{Murphey00}, sensor allocation \cite{MartinezCortez07}, task assignment \cite{FeigeVondrak06}, among others. Due to the inherent limitations in sensing and communication, in all these applications the control algorithms are required to rely only on local information.
Thus, we model distributed covering problems as strategic-form games, where the system operator has the ability to assign local objective functions to each agent.
Indeed, as shown in \cite{ChenRoughgarden10, Marden14}, Game Theory lends itself to analyse distributed systems where individual agents adjust their behaviour in response to partial information. Such game theoretic approach offers the possibility to build upon existing tools to quantify the system performance, as well as the opportunity to exploit readily available algorithms to compute equilibria in a distributed fashion \cite{MardenWierman13,Marden17toappear}. The overarching goal of the system operator is to \emph{design} local utilities in order to render the equilibria of the game as 
efficient as possible. Agents can then be guided towards an equilibrium of such game by means of existing distributed algorithms \cite{MardenWierman13, gentile2017nash}. 
It is important to highlight that we are not modelling agents as competing units, but we are rather designing their utilities to achieve the global objective.

Building on the previous results of \cite{Gairing09,Marden17toappear}, we contribute as follows.
\begin{itemize}
\item[i)] We study the problem of optimally designing the utility functions in the case when the true cardinality is not known, but only an upper bound is available.\footnote{A simple bound is given by the number of agents.} We further perform a risk-reward analysis in the case when the information on the cardinality of the game is uncertain. When the goal is to guard the system against the worst case performances, the right choice is to design the utilities as if the true cardinality was the given upper bound. Different designs will offer potential benefits, but come with a certain degree of risk. These results are presented in Theorem \ref{thm:gainsvslosses}.
\item[ii)] Motivated by the potential advantages and inherent shortcomings presented in the risk-reward analysis, we propose a distributed and asynchronous algorithm that dynamically updates the utility functions while agents adjust their behaviour over time. Such algorithm requires no initial information, and is certified to perform on par or better (in a worst case sense) to the optimal design possible, had we known the cardinality in the first place. These results are summarised in Theorem \ref{thm:alg}.
\item[iii)] We compare, instance by instance, the performance of the proposed learning algorithm with the performance of the optimal design obtained with full knowledge of the cardinality. We show that it is not possible to deem one approach superior to the other on all instances of covering problems, in that there are instances where one outperforms the other, and the converse too. These results are presented in Theorem \ref{thm:counterex}.

\end{itemize}

The remaining of the paper is organised as follows. The next section introduces the covering problem, its formulation as a strategic game and the metric used to measure the system-level performance. Section \ref{sec:unknownk} studies the utility design problem when a sole upper bound on the cardinality is available and presents the risk-reward tradeoff associated with the use of uncertain information. Section \ref{sec:dynamic} shows the possibility of dynamically adjusting the utility functions to improve the performance. Numerical simulations and conclusions follow.

\subsection*{Notation}
For any two positive integers $p \le q$, denote $[p]$ = $\{1,...,p\}$ and $[p,q] = \{p,...,q\}$; given $(\a{1},\dots,\a{n})$, denote $\a{-i}=(\a{1},\dots,\a{i-1},\a{i+1},\dots,\a{n})$. We use $\mb{N}$, $\mb{N}_0$ and $\mb{R}_{\ge0}$ to denote the set of natural numbers excluding zero,  the set of natural numbers including zero, and the set of non-negative real numbers, respectively.

\section{Distributed covering via game theory}
\label{sec:model}
In this section we present the covering problem and the associated covering game. We further define the performance metric used throughout the paper and recap previous results.
\subsection{Model}
Let us consider the problem of assigning a collection of agents $\N=\{1,\dots,n\}$ to a finite set of resources $\mc{R}=\{r_1,\dots,r_m\}$ with the goal of maximising the value of covered resources. 
The feasible allocations for each agent $i\in N$ are the elements of the action set $\a{i}\in\A{i}\subseteq 2^\mathcal R$, while every resource $r\in\mathcal{R}$ is associated with a non-negative value $v_r\ge0$. Observe that $\a{i}\subseteq \mc{R}$.  The welfare of an allocation $a=(\a{1},\dots,\a{n})\in \A{1}\times \dots\times \A{n}$ is measured by the total value of covered resources
\[W(a)\coloneqq \sum_{r\,:\, |a|_r\ge 1}v_r\,,\]
where $|a|_r$ denotes the number of agents that choose resource $r$ in allocation $a$. The \emph{covering problem} $C=\{N,\mc{R}, \{ \A{i}\}_{i\in N} ,\{v_r\}_{r\in\mc{R}}\}$ consists in finding an optimal allocation\footnote{While this problem is in general intractable, approximation algorithms for finding a near optimal solution to submodular optimization problems have been extensively studied in the literature \cite{nem78-I,krause07}. The focus of this literature is predominantly on centralized algorithms for finding near optimal allocations.  In contrast, our focus is on distributed solutions where each decision-making entity has incomplete information about the \mbox{system as a whole.}}, that is an assignment 
\[\aopt\in\arg\max_{a\in\mathcal{A}} W(a).\] Given a covering problem $C$, we define its cardinality as the maximum number of players that can concurrently select the same resource, that is
\begin{equation}
\label{eq:carddef}
\max_{r\in\mc{R},\,a\in\mc{A}}|a|_r\,.	
\end{equation}

Instead of directly specifying a distributed algorithm, we shift the focus to the design of local utility functions for each agent, as proposed first for distributed welfare games by \cite{MardenArslan09,MardenWierman13} and successively by \cite{Gairing09}. Within this framework, each agent $i\in \N$ is associated with a utility function of the form
\be
u^i(\a{i},\a{-i})\coloneqq\sum_{r\in \a{i}}v_r \cdotshort f(|a|_r)\,.
\label{eq:defutility}
\ee
  The function $f:[n]\rightarrow \mb{R}_{\ge0}$ constitutes our design choice and is called \emph{distribution rule} as it represents the fractional benefit an agent receives from each resource he selects. 
The advantages of using utilities of the form \eqref{eq:defutility} are twofold.
First, $u^i(\a{i},\a{-i})$ is local as it depends only on the resources agent $i$ selects, their value and the number of agents that selects the same resources.
Second, \eqref{eq:defutility} allows to construct a distribution rule irrespective of $\{\A{i}\}_{i\in N}$ and $\{v_r\}_{r\in\mc{R}}$ so that the final design is scalable and applies to different choices of the action sets and of the resource valuations.

Given a covering problem $C$ and a distribution rule $f:[n]\rightarrow \mb{R}_{\ge0}$, we consider the associated \emph{covering game} $\Gf\coloneqq\{C,f\}=\{N,\mc{R}, \{ \A{i}\}_{i\in N} ,\{v_r\}_{r\in\mc{R}}, f\}$, where $\A{i}$ is the set of feasible allocations and the utility of agent $i\in N$ is as in equation \eqref{eq:defutility}. 

We \emph{do not} aim at designing $f$ using information on the specific instance of covering problem at hand, as such information is often not available to the system designer. Our goal is rather to construct a distribution rule that behaves well for a large class of problems. Hence, we consider the set of covering problems for which the cardinality is exactly equal to $k\in\mb{N}$, $k\le n$. Given a distribution rule $f:[k]\rightarrow \mb{R}_{\ge0}$, we define the set of associated games as
\[
\gfk\coloneqq\{ \Gf=\{C,f\}~:\max_{r\in\mc{R},\,a\in\mc{A}}|a|_r = k\}\,.
\]
Our objective is to design $f:[k]\rightarrow \mb{R}_{\ge0}$ so that the  efficiency of all the equilibria of games in $\gfk$ is as high as possible. Note that for fixed $f$, any game $G$ is potential \cite{MardenArslan09}. Hence existence of equilibria is guaranteed and distributed algorithms, such as the best response scheme, converge to them \cite{MondererShapley96}. Throughout the paper, we focus on pure Nash equilibria \cite{Nash50}, which we will refer to in the following just as equilibria.
\begin{definition}[Pure Nash equilibrium]
\label{def:pne}
Given a game $\Gf$, an allocation $a_e\in \mc{A}$ is a pure Nash equilibrium iff $u^i(\a{i}_e,\a{-i}_e)\ge u^i(\a{i},\a{-i}_e)$ for all deviations $\a{i}\in \A{i}$ and for all players $i\in N$. In the following we use $\nashe{\Gf}$ to denote the set of Nash equilibria of $\Gf$.
\end{definition}
For a given distribution rule, we evaluate the efficiency of the Nash equilibria of games in $\gfk$, adapting the concept of Price of Anarchy from \cite{Koutsoupias99} as
\be
\label{eq:poadef}
\poafk\coloneqq\inf_{G\in \gfk}\biggl\{\frac{\min_{a\in \nashe{G}}{W(a)}}{\max_{a\in\mc{A}}W(a)}\biggr\}\le 1\,.
\ee
In essence, the quantity $\poafk$ bounds the inefficiency of the worst equilibrium (and thus of all equilibria) over games in $\gfk$, that is over games with distribution rule set to $f$ and cardinality equal to $k$.\footnote{Observe that the quantity $\poafk$ defined in \eqref{eq:poadef} bounds the price of anarchy \emph{not only} for games with cardinality exactly equal to $k$, but for all the games with cardinality smaller or equal to $k$. Indeed, among all games with cardinality smaller or equal to $k$, the worst price of anarchy is achieved in a game with cardinality exactly equal to $k$. This is because, for any game with cardinality smaller than $k$, it is possible to construct a game with cardinality $k$ that has the same price of anarchy, by assigning an additional resource $r_0$, valued zero, to $k$ agents.}
The higher the price of anarchy, the better the performance guarantees we can provide.\footnote{The quantity $W(a)$ appearing in Equation \eqref{eq:poadef} does depend on which game instance $G$ we are considering, since the resource valuations do. Hence, a more formal notation would entail using $W(a;G)$. In the interest of readability, we avoid the latter and simply use $W(a)$ when no ambiguity arise.}
\subsection{Related Work and Performance Guarantees}
The problem of designing a distribution rule so as to maximise $\poafk$ has been studied in \cite{Gairing09} and \cite{Marden17toappear}. Both works impose a natural constraint on the admissible $f$, requiring $f(1)=1$ and $f:[k]\rightarrow\mb{R}_{\ge0}$ to be non-increasing. The optimal distribution rule is explicitly derived in the former work, while the latter shows how $\poafk$ is fully characterised by a single scalar quantity $\cfk$ defined in \eqref{eq:chidef}, measuring how fast the distribution rule $f$ decreases. We intend to build upon these results, which are summarised in the following proposition. Given $k$ and a distribution rule $f$, we define 
\be
\cfk\coloneqq \max_{j\le k-1} \{j\cdot f(j)-f(j+1),(k-1)\cdot f(k)\}\,.
\label{eq:chidef}
\ee
\begin{prop}[\cite{Gairing09, Marden17toappear}]
\label{thm:optimalf}
Consider a non-increasing distribution rule $f:[k]\rightarrow \mb{R}_{\ge0}$, with $f(1)=1$.\begin{itemize}
\item[i)] The price of anarchy over the class $\gfk$ is 
\[
 \poafk=\frac{1}{1+\cfk}\,.
\]
\item[ii)] The price of anarchy over the class $\gfk$ is maximised for 
\be
f^\star_k(j)=(j-1)!\frac{\frac{1}{(k-1)(k-1)!}+\sum_{i=j}^{k-1}\frac{1}{i!}}{\frac{1}{(k-1)(k-1)!}+\sum_{i=1}^{k-1}\frac{1}{i!}}\,,\quad j\in [k]
\label{eq:optimalf}
\ee
with corresponding 
\be
\label{eq:optimalchi}
\rchi(f^\star_{k},k)=(k-1)\cdotshort f^\star_k(k)\,.
\ee
\item[iii)] The optimal price of anarchy is a decreasing function of the cardinality $k$
\be
\label{eq:optimalpoa}
\poafsk=1-\frac{1}{\frac{1}{(k-1)(k-1)!}+\sum_{i=1}^{k-1}\frac{1}{i!}}\,.
\ee
\end{itemize}
\end{prop}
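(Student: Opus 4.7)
The proposition collects three claims whose proofs I would tackle separately, leveraging the structure that $\chi(f,k)$ is defined as a max of finitely many linear expressions in the values $f(1),\dots,f(k)$.

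For part (i), my plan is a smoothness-style primal/dual argument. Fix any game $G\in\gfk$, a Nash equilibrium $a_e$, and an optimal allocation $a_o$. Summing the equilibrium inequalities $u^i(\a{i}_e,\a{-i}_e)\ge u^i(\a{i}_o,\a{-i}_e)$ over $i\in N$ and expanding using \eqref{eq:defutility}, I regroup the terms resource by resource, binning each resource by the pair $(|a_o|_r,|a_e|_r)$. The right-hand side produces a weighted sum of values $v_r$ whose coefficients, per bin, have the form $j\cdot f(j)-f(j+1)$ or $(k-1)\cdot f(k)$, i.e.\ exactly the quantities bounded above by $\chi(f,k)$. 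This yields $W(a_o)\le W(a_e)+\chi(f,k)\, W(a_e)$, hence $\poafk\ge 1/(1+\cfk)$. For tightness I would construct an explicit worst-case instance: two-resource gadgets on which either the bin attaining $j f(j)-f(j+1)$ or the bin $(k-1)f(k)$ is forced to hold with equality, demonstrating that no smaller bound is possible.

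For part (ii), the problem is the finite linear-programming minimax
\[
\min\nolimits_{f}\; \chi(f,k)\qquad \text{s.t. }\; f(1)=1,\ f\ \text{non-increasing},\ f\ge 0.
\]
I would argue first that at an optimum all active constraints in the max must be tight: if any $j\cdot f(j)-f(j+1)$ were strictly less than $\chi$ one could perturb $f(j+1)$ downward to reduce $\chi$; a symmetric argument handles $(k-1)f(k)$. Setting $j\cdot f(j)-f(j+1)=\chi$ for $j=1,\dots,k-1$ gives a linear recursion $f(j+1)=j\cdot f(j)-\chi$ with $f(1)=1$. Solving the recursion (divide by $j!$ to telescope) yields a closed-form expression of $f(j)$ in terms of $\chi$ and the partial sums $\sum_{i=j}^{k-1}1/i!$. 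Finally the boundary equation $(k-1)f(k)=\chi$ pins down $\chi$, and substituting back produces exactly \eqref{eq:optimalf} and \eqref{eq:optimalchi}. I would then verify that the resulting $f^\star_k$ is indeed non-increasing and non-negative, so it lies in the feasible set. The main subtlety here is justifying that balancing every constraint is without loss of optimality, including handling the non-increasing constraint carefully.

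For part (iii), I would simply combine (i) and (ii). From \eqref{eq:optimalchi} and \eqref{eq:optimalf} one reads off
\[
\chi(f^\star_k,k)=(k-1)\cdotshort f^\star_k(k)=\frac{1/((k-1)(k-1)!)}{1/((k-1)(k-1)!)+\sum_{i=1}^{k-1}1/i!},
\]
and plugging into $\poafsk=1/(1+\chi(f^\star_k,k))$ gives \eqref{eq:optimalpoa} after rearrangement. Monotonicity in $k$ then reduces to showing that the denominator
\[
D_k \coloneqq \frac{1}{(k-1)(k-1)!}+\sum_{i=1}^{k-1}\frac{1}{i!}
\]
is strictly increasing in $k$, which is a direct computation of $D_{k+1}-D_k$ using $k!=k(k-1)!$; the main obstacle is just careful bookkeeping of the telescoping terms. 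I expect the hardest single step across the whole proposition to be the tightness direction in (i), since it requires exhibiting an instance whose structure simultaneously realises the worst bin in the definition of $\chi(f,k)$; everything else is essentially algebraic manipulation once the recursion in (ii) is identified.
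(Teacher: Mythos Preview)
The paper does not supply its own proof of this proposition: it is stated as a summary of prior results and attributed to \cite{Gairing09,Marden17toappear}, with no accompanying argument in the text or appendices. So there is no ``paper's proof'' to compare against. That said, your outline matches very closely the approach taken in those cited references (and the smoothness-style bound you sketch for part~(i) is essentially the same computation the present paper carries out in full detail in its proof of Theorem~\ref{thm:alg}, while the LP/KKT reasoning you propose for part~(ii) is exactly the machinery the paper deploys for Proposition~\ref{prop:riskyf}).

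One small slip worth flagging in your part~(ii): the perturbation you describe goes the wrong way. If $j\cdot f(j)-f(j+1)<\chi$, \emph{decreasing} $f(j+1)$ makes that expression larger, not smaller, and simultaneously decreases $(j+1)f(j+1)-f(j+2)$; so the informal ``perturb $f(j+1)$ downward to reduce $\chi$'' does not work as stated. The conclusion that all constraints are tight at the optimum is correct, but the clean justification is the KKT/dual argument (as in the paper's proof of Proposition~\ref{prop:riskyf}), not a single-coordinate perturbation. Everything else in your plan---the recursion $f(j+1)=j f(j)-\chi$, telescoping via division by $j!$, pinning $\chi$ with the boundary equation, and the monotonicity of $D_k$ in part~(iii)---is sound.
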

\section{The case of unknown cardinality: a Risk-Reward tradeoff}
\label{sec:unknownk}
When the cardinality $k$ defining the class of games $\gfk$ is known, Proposition \ref{thm:optimalf} gives a conclusive answer on which distribution rule agents should choose to achieve the best worst case performance. In spite of that, the knowledge of the exact cardinality is in many applications not available or may require excessive communications between the agents to be determined.

Motivated by this observation, we study in the following the problem of designing a distribution rule when the cardinality $k$ defining the class of games $\gfk$ is not known, but an upper bound $k\le \bk$ is available. Observe that a universal upper bound for such quantity can be easily computed as the number $n$ of agents. Potentially tighter bounds can be derived for specific applications. 
 Our objective is to design a distribution rule $f:[\bk]\rightarrow \mb{R}_{\ge0}$ with the best performance guarantees possible with the sole knowledge of $\bk$. Once  such a distribution rule has been designed, one can use existing distributed algorithms to find an equilibrium as discussed in the introduction. Two natural questions arise in this context:
 \begin{enumerate}
 	\item  How should we select the distribution rule?
 	\item What performance can we guarantee?
 \end{enumerate}
 We will show how selecting $f^\star_{\bk}$ guards us against the worst case performance but will not guarantee the same efficiency of $f^\star_k$, when $k<\bk$. We will then present the potential benefits and risks associated with a more aggressive choice. These results motivate Section \ref{sec:dynamic}, where we will present a dynamic scheme that overcomes the difficulties encountered here, offering the same performances of $f^\star_k$ at no risk. 
 
 \subsection{Two alternative distributions}
A natural choice when an upper bound on the cardinality is available consists in designing the distribution rule exactly at the upper bound, that is using $f^\star_\bk$. A different choice might entail constructing a distribution rule where the entries $[\p]$ are designed as if the cardinality was $\p<\bk$, while the remaining entries $[\p+1,\bk]$ are optimally filled. The latter suggestion is inspired by the observation that the optimal system level performance (measured by the price of anarchy) is a decreasing function of $k$ as per \eqref{eq:optimalpoa}. This distribution is denoted with $\fr_\p$ and is constructed fixing $\fr_\p(j)=f_\p^\star(j)$ for $j\in[\p]$. The tail entries corresponding to $j\in[\p+1,\bk]$ are chosen to mitigate the risk taken. Formally, for any $1<\p<\bk$ we define the distribution rule $\fr_\p:[\bk]\rightarrow\mb{R}_{\ge0}$ as a solution of the following optimisation problem 
\be\begin{split}
\fr_\p \in \arg&\max_{f\in\mc{F}}\poa(f,\bk)\\
&\text{s.t.} \quad f(j)=f^\star_\p(j)\quad\forall j\in [\p]\,,
\end{split}
\label{eq:ftail}
\ee
where $\mc{F}=\{f:[\bk]\rightarrow\mb{R}_{\ge0} \text{ with }f(1)=1,~f(j+1)\le f(j),~\forall j\in[\bk-1]\}$ is the set of admissible distributions.
 
Note that we do not define $\fr_\p$ for $\p=1$ or $\p=\bk$ as it would reduce in both cases to $f^\star_\bk$. Further observe that the constraint $\fr_\p(j)=f^\star_\p(j),\forall j\in [\p]$ is equivalent to requiring $f'_{\p}\in\arg\max_{f\in\mc{F}}\poa(f,\p)$.

The next proposition characterises explicitly $\fr_\p$.
\begin{prop}
\label{prop:riskyf}
For any $1< \p <\bk$, the distribution $\fr_\p$ defined in \eqref{eq:ftail} is given by
\be\footnotesize
\label{eq:solutionf}
\fr_\p(j)=\begin{cases}
 f_{\p}^\star(j)\hspace*{5.1cm} j\in[\p]\\[0.1cm]
\frac{(j-1)!}{(\p-1)!}f_{\p}^\star(\p)- \rchi(\fr_\p,\bk)
\bigl( \sum_{h=1}^{j-1-\p}\frac{(j-1)!}{(j-h-1)!}+1\bigr) \quad \\\hspace*{49.5mm} j\in [\p+1,\bk]\,
\end{cases}
\ee
where $\rchi(\fr_\p,\bk)$ is given by the following expression 
\be
\label{eq:solutionchi}
\rchi(\fr_\p,\bk) = \frac{(\bk-1)(\bk-1)!}{\bk+(\bk-1)\sum_{h=1}^{\bk-1-\p}\frac{(\bk-1)!}{(\bk-h-1)!}}\frac{f_\p^\star(\p)}{(\p-1)!}.
\ee
\end{prop}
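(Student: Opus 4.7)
My plan is to convert the maximization \eqref{eq:ftail} into a minimization: since $\poa(f,\bk)=1/(1+\rchi(f,\bk))$ by Proposition~\ref{thm:optimalf} (i), we equivalently want to minimize $\rchi(f,\bk)=\max\{\max_{j\le\bk-1}(j f(j)-f(j+1)),\,(\bk-1)f(\bk)\}$ over $f\in\mc{F}$ with $f(j)=f^\star_\p(j)$ for $j\in[\p]$. The terms with $j\in[\p-1]$ are then fully determined by the constraint; by Proposition~\ref{thm:optimalf} applied at cardinality $\p$, their maximum equals $(\p-1)f^\star_\p(\p)=\rchi(f^\star_\p,\p)$. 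Hence only the ``tail'' terms $j f(j)-f(j+1)$ for $j\in[\p,\bk-1]$ together with $(\bk-1)f(\bk)$ depend on the design variables $f(\p+1),\dots,f(\bk)$.

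Next I would argue that at the optimum all these tail terms attain a common value $\chi^\star$. The reasoning is that, with $f(\p)=f^\star_\p(\p)$ fixed, the inequalities $f(j+1)\ge j f(j)-\chi$ (for $j\in[\p,\bk-1]$) impose lower bounds on $f(\p+1),\dots,f(\bk)$; any slack would only enlarge $f(\bk)$, forcing the constraint $(\bk-1)f(\bk)\le\chi$ to tighten. The minimum feasible $\chi$ is therefore realised when every tail constraint is tight, yielding the first-order linear recursion
\be
f(j+1)=j f(j)-\chi^\star, \qquad j\in[\p,\bk-1],
\ee
with the boundary condition $(\bk-1)f(\bk)=\chi^\star$.

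I would then unroll this recurrence starting from $f(\p)=f^\star_\p(\p)$: iterating gives $f(j)=\tfrac{(j-1)!}{(\p-1)!}f^\star_\p(\p)-\chi^\star\bigl(\sum_{h=1}^{j-1-\p}\tfrac{(j-1)!}{(j-h-1)!}+1\bigr)$, which is exactly \eqref{eq:solutionf}. Substituting $j=\bk$ into the expression for $f(\bk)$ and imposing $(\bk-1)f(\bk)=\chi^\star$ solves for $\chi^\star$, producing \eqref{eq:solutionchi} after an elementary rearrangement.

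The main obstacle I foresee is verifying admissibility, namely that the constructed $\fr_\p$ is non-increasing, so that $\fr_\p\in\mc{F}$ and the derivation above is actually a valid minimizer rather than a candidate to be discarded. Monotonicity at the junction $j=\p$ requires $\fr_\p(\p+1)\le f^\star_\p(\p)$; combined with the tight recursion $\chi^\star=\p f^\star_\p(\p)-\fr_\p(\p+1)$, this is equivalent to $\chi^\star\ge(\p-1)f^\star_\p(\p)=\rchi(f^\star_\p,\p)$, which additionally guarantees that the tail terms dominate and hence $\rchi(\fr_\p,\bk)=\chi^\star$. Monotonicity on the rest of the tail $\fr_\p(j+1)\le\fr_\p(j)$ for $j\in[\p,\bk-1]$ amounts to $(j-1)\fr_\p(j)\ge\chi^\star$, which must be checked by substituting the closed forms \eqref{eq:solutionf}--\eqref{eq:solutionchi} and simplifying; this is the one genuinely computational step of the argument.
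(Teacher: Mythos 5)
Your overall strategy is the same relax-and-verify route the paper takes: reduce the maximisation \eqref{eq:ftail} to minimising $\rchi(f,\bk)$ via Proposition~\ref{thm:optimalf}(i), argue that the optimum of the tail subproblem is attained when the constraints $jf(j)-f(j+1)\le x$ for $j\in[\p,\bk-1]$ and $(\bk-1)f(\bk)\le x$ are all tight, unroll the resulting recursion to get \eqref{eq:solutionf}--\eqref{eq:solutionchi}, and then check admissibility. Your tightness argument (slack in any tail constraint only raises the admissible $f(\bk)$, hence raises the feasible $x$) is a legitimate, more elementary substitute for the paper's KKT verification of optimality of the relaxed LP, and your unrolled recursion is exactly the paper's system \eqref{eq:friskyformula}.

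The genuine gap is that the admissibility checks you correctly identify as necessary are precisely the substantive content of the paper's proof, and you neither carry them out nor indicate how they would go. First, the junction condition $\rchi(\fr_\p,\bk)\ge(\p-1)f^\star_\p(\p)=\rchi(f^\star_\p,\p)$, which you need both for monotonicity at $j=\p$ and to conclude that the tail value, not the frozen head, determines $\rchi(\fr_\p,\bk)$, is not at all immediate from the closed form \eqref{eq:solutionchi}: in the paper it follows from $\rchi(f^\star_\p,\p)\le\rchi(f^\star_\bk,\bk)<\rchi(\fr_\p,\bk)$, where the last (strict) inequality is Lemma~\ref{lemma:worsechi}, proved by a backwards induction on $\p$. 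Second, your stated equivalent for tail monotonicity has the inequality reversed: with the tight recursion $\fr_\p(j+1)=j\fr_\p(j)-\rchi(\fr_\p,\bk)$, the requirement $\fr_\p(j+1)\le\fr_\p(j)$ is equivalent to $(j-1)\fr_\p(j)\le\rchi(\fr_\p,\bk)$ (consistent with your own junction computation at $j=\p$), not $(j-1)\fr_\p(j)\ge\rchi(\fr_\p,\bk)$; attempting to verify the condition as you wrote it would fail. The actual verification (the paper's Lemma~\ref{lemma:generalinequality}) requires unrolling the recursion into factorial sums and a telescoping estimate, so ``substituting and simplifying'' is a nontrivial step that must be supplied before the proposition can be considered proved.
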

The proof is reported in Appendix \ref{app:A}.

\begin{figure}[!ht]
    \centering
    \newlength\figureheight 
	\newlength\figurewidth 
    \setlength\figureheight{3.3cm} 
	\setlength\figurewidth{6cm} 
    \definecolor{mycolor1}{rgb}{0.85000,0.32500,0.09800}%
\begin{tikzpicture}

\begin{axis}[%
width=\figurewidth,
height=\figureheight,
at={(1.011111in,0.641667in)},
scale only axis,
xmin=1,
xmax=10,
xtick={1,2,3,4,5,6,7,8,9,10},
xlabel={$j$},
ymin=0,
ymax=1,
ylabel={$f(j)$},
ylabel style = {rotate=-90},
ticklabel style = {font=\large},
]
\addplot [color=black,line width=0.5pt,mark=x,mark options={solid}, mark size=3pt]
  table[row sep=crcr]{
1   1.000000000000000\\
2   0.418023294250603\\
3   0.254069882751809\\
4   0.180232942506029\\
5   0.138955064274718\\
6   0.112798615624194\\
7   0.094814987995768\\
8   0.081728210220980\\
9   0.071848976018444\\
10  0.064664078416600\\
   };
\addlegendentry{$f^\star_{\bk}$};
\addplot [color=blue,dashed,line width=0.5pt,mark=o,mark options={solid}]
  table[row sep=crcr]{%
1	1\\
2	0.500000000000000\\
3	0.303894407615829\\
4	0.215577630463316\\
5	0.166204929469092\\
6	0.134919054961301\\
7	0.113408737383566\\
8	0.0977555693007162\\
9	0.0859389620236470\\
10	0.0773450658307411\\
};
\addlegendentry{$f'_{\p}$};
\end{axis}
\end{tikzpicture}%
    \caption{Example of distribution rules $f^\star_{\bk}$ and $\fr_{\p}$ as defined respectively in  \eqref{eq:optimalf} and \eqref{eq:solutionf}; $\bk=10$, $\p=2$.}
    \label{fig:opt-fpoa-fpos}
  \end{figure}

\begin{rmk*}
In \cite{Gairing09} a distribution rule $f$ was required to satisfy the constraint $j\cdotshort f(j)\le 1$ for all $j$. Loosely speaking the above requirement guarantees that a distribution rule does not overpay the players, in the sense that $\sum_{i\in N} u^i(\a{i},\a{-i})\le W(a)$ for all allocations.
Observe that such constraint might be important for economic applications, but it is irrelevant in the design of engineering systems. 
While \cite{Gairing09} shows that the distribution rule $f^\star_{\bk}$ satisfies this property, the next lemma proves that also $\fr_{\p}$ verifies this condition even if this was not requested a priori.
\end{rmk*}

\begin{lemma}
\label{lemma:riskydoesnotoverpay}
For any $1< \p< \bk$ the distribution $\fr_\p:[\bk]\rightarrow\mb{R}_{\ge0}$ satisfies $j\cdotshort \fr_\p(j)\le1$ for all $j\in[\bk]$.
\end{lemma}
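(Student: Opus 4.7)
The plan is to split the argument on whether $j \le \p$ or $j \ge \p+1$. The first case is immediate: by construction $\fr_\p(j) = f^\star_\p(j)$ on $[\p]$, so the claim reduces to $j \cdotshort f^\star_\p(j) \le 1$ for $j \in [\p]$, which is the non-overpayment property established in \cite{Gairing09} (with cardinality $\p$ in place of $\bk$), and which can equivalently be checked directly from the closed form \eqref{eq:optimalf}.

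For $j \in [\p+1, \bk]$, the key observation is that by the very definition of $\rchi$ we have the uniform bounds
\[
j \cdotshort \fr_\p(j) - \fr_\p(j+1) \le \rchi(\fr_\p, \bk), \quad j \in [\p, \bk-1],
\]
together with $(\bk-1) \cdotshort \fr_\p(\bk) \le \rchi(\fr_\p, \bk)$. Reading off $\fr_\p(\p+1)$ directly from \eqref{eq:solutionf} (the summation in the tail branch is empty when $j=\p+1$) yields $\fr_\p(\p+1) = \p \cdotshort f^\star_\p(\p) - \rchi(\fr_\p, \bk) \le 1 - \rchi(\fr_\p, \bk)$, where the inequality uses the first case. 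Since $\fr_\p \in \mc{F}$ is non-increasing by construction, this bound propagates to $\fr_\p(j) \le 1 - \rchi(\fr_\p, \bk)$ for every $j \in [\p+1, \bk]$.

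Combining the two ingredients finishes the proof. For $j \in [\p+1, \bk-1]$ the uniform inequality gives $j \cdotshort \fr_\p(j) \le \fr_\p(j+1) + \rchi(\fr_\p, \bk) \le 1$, and for $j = \bk$ one writes $\bk \cdotshort \fr_\p(\bk) = \fr_\p(\bk) + (\bk-1) \cdotshort \fr_\p(\bk)$, which is bounded above by $(1 - \rchi(\fr_\p, \bk)) + \rchi(\fr_\p, \bk) = 1$. I do not anticipate any real obstacle here: the only point requiring care is the short calculation that extracts $\fr_\p(\p+1) = \p \cdotshort f^\star_\p(\p) - \rchi(\fr_\p, \bk)$ from \eqref{eq:solutionf}, after which the definition of $\rchi$ and the monotonicity built into $\mc{F}$ do all the work.
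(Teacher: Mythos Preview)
Your proof is correct but follows a genuinely different route from the paper. The paper invokes an auxiliary monotonicity lemma (Lemma~\ref{lemma:generalinequality}) showing that $j\,\fr_\p(j)$ is non-increasing in $j$ for $j\in[\p,\bk]$, so that $j\,\fr_\p(j)\le \p\,\fr_\p(\p)=\p\,f^\star_\p(\p)\le 1$ immediately. Your argument sidesteps that lemma entirely: you read off $\fr_\p(\p+1)=\p\,f^\star_\p(\p)-\rchi(\fr_\p,\bk)\le 1-\rchi(\fr_\p,\bk)$ from the closed form, propagate this bound to all tail entries via the monotonicity already contained in $\mc{F}$, and then combine it with the defining inequalities $j\,\fr_\p(j)\le \fr_\p(j+1)+\rchi(\fr_\p,\bk)$ and $(\bk-1)\fr_\p(\bk)\le \rchi(\fr_\p,\bk)$. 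The trade-off: the paper's route yields the stronger intermediate fact that $j\mapsto j\,\fr_\p(j)$ is monotone on the tail (at the cost of proving Lemma~\ref{lemma:generalinequality} separately), whereas your route is more self-contained, relying only on Proposition~\ref{prop:riskyf} and the definition of $\rchi$.
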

The proof is provided in Appendix \ref{app:A}.
\subsection{Performance comparison}
Based on the metric introduced in \eqref{eq:poadef}, we compare in this section the performance of $f^\star_\bk$ with the performance of $\fr_\p$. Theorem \ref{thm:gainsvslosses} constitutes the main result of this section. 
\begin{thm}
\label{thm:gainsvslosses}
Consider the set of games $\gfk$, where $k\le  \bk$.
\begin{itemize}
\item [i)] The distribution $f^\star_{\bk}$ has performance
\[
\poa(f^{\star}_{\bk},k)=\poa(f^{\star}_{\bk},\bk)\,.
\]
Such performance is strictly worse than the one achieved by the distribution $f^\star_k$ if $k<\bk$ and equal if $k=\bk$. 
\item[ii)] For $1 < \p < k$ the distribution $\fr_\p$  has performance
\[
\poa(\fr_\p,k)=\poa(\fr_\p,\bk)\,,
\]
which is strictly worse than the one achieved by $f^\star_\bk$.  

For $k\le\p <\bk$ the distribution $\fr_\p$ has performance
\[
\poa(\fr_\p,k)=\poa(f^\star_\p,\p) \,,
\]
which is strictly better than the one achieved by $f^\star_\bk$. 
\end{itemize}
\end{thm}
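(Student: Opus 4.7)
The overall strategy is to invoke Proposition~\ref{thm:optimalf}(i), which reduces every comparison of $\poa$ values to a comparison of the scalar $\rchi(f,k)=\max_{j\le k-1}\{jf(j)-f(j+1),(k-1)f(k)\}$. The proof then rests on the observation that both candidate distributions $f^\star_\bk$ and $\fr_\p$ are ``fully balanced'': by construction the quantities $jf(j)-f(j+1)$ are constant on explicit contiguous ranges of $j$, and match the boundary term $(\bk-1)f(\bk)$. To evaluate $\rchi(f,k)$ at any intermediate $k$, one then just has to check a single inequality on the new boundary term $(k-1)f(k)$.

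For Part (i), write $D_\bk\coloneqq\frac{1}{(\bk-1)(\bk-1)!}+\sum_{i=1}^{\bk-1}\frac{1}{i!}$. A direct expansion of \eqref{eq:optimalf} shows $jf^\star_\bk(j)-f^\star_\bk(j+1)=1/D_\bk$ for every $j\in[1,\bk-1]$, and \eqref{eq:optimalchi} gives $(\bk-1)f^\star_\bk(\bk)=1/D_\bk$ as well, so $\rchi(f^\star_\bk,\bk)=1/D_\bk$. Hence for every $k\le\bk$
\[
\rchi(f^\star_\bk,k)=\max\{1/D_\bk,\,(k-1)f^\star_\bk(k)\},
\]
and the equality $\rchi(f^\star_\bk,k)=\rchi(f^\star_\bk,\bk)$ reduces to the inequality $(k-1)f^\star_\bk(k)\le 1/D_\bk$. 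This last inequality I would settle by a short backward induction along the recursion $f(j+1)=jf(j)-1/D_\bk$ anchored at $f(\bk)=1/((\bk-1)D_\bk)$, which shows that $j\mapsto(j-1)f^\star_\bk(j)$ is non-decreasing on $[1,\bk]$. The second claim of (i) is then immediate from Proposition~\ref{thm:optimalf}(iii): since $\poa(f^\star_k,k)$ strictly decreases in $k$, $\poa(f^\star_\bk,k)=\poa(f^\star_\bk,\bk)<\poa(f^\star_k,k)$ for $k<\bk$, with equality when $k=\bk$.

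For Part (ii), the closed form \eqref{eq:solutionf}--\eqref{eq:solutionchi} implies that $\fr_\p$ coincides with $f^\star_\p$ on $[\p]$, so $j\fr_\p(j)-\fr_\p(j+1)=1/D_\p$ for $j\in[1,\p-1]$; on the tail a short expansion of \eqref{eq:solutionf} yields $j\fr_\p(j)-\fr_\p(j+1)=\rchi(\fr_\p,\bk)$ for $j\in[\p,\bk-1]$ together with $(\bk-1)\fr_\p(\bk)=\rchi(\fr_\p,\bk)$, so in particular $\rchi(\fr_\p,\bk)\ge 1/D_\p$. In the case $1<\p<k\le\bk$, the same monotonicity argument as in Part (i), applied now on $[\p,\bk]$, gives $(k-1)\fr_\p(k)\le(\bk-1)\fr_\p(\bk)=\rchi(\fr_\p,\bk)$, whence $\rchi(\fr_\p,k)=\max\{1/D_\p,\,\rchi(\fr_\p,\bk),\,(k-1)\fr_\p(k)\}=\rchi(\fr_\p,\bk)$. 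The strict inequality $\rchi(\fr_\p,\bk)>1/D_\bk=\rchi(f^\star_\bk,\bk)$ follows from the uniqueness of the minimizer in Proposition~\ref{thm:optimalf}(ii): with $f(1)=1$ fixed and all $\bk$ constraints simultaneously tight at $1/D_\bk$, the values $f(2),\dots,f(\bk)$ are pinned down as $f^\star_\bk$, and since $\fr_\p\ne f^\star_\bk$ on $[2,\p]$ the value $\rchi(\fr_\p,\bk)$ must be strictly larger. In the case $1<k\le\p<\bk$, $\rchi(\fr_\p,k)$ depends only on $\fr_\p(1),\dots,\fr_\p(k)=f^\star_\p(1),\dots,f^\star_\p(k)$, so $\rchi(\fr_\p,k)=\rchi(f^\star_\p,k)$; applying Part (i) with $\bk$ replaced by $\p$ gives $\rchi(f^\star_\p,k)=1/D_\p=\rchi(f^\star_\p,\p)$, whence $\poa(\fr_\p,k)=\poa(f^\star_\p,\p)$, which by Proposition~\ref{thm:optimalf}(iii) strictly exceeds $\poa(f^\star_\bk,\bk)=\poa(f^\star_\bk,k)$.

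The only non-routine step is the boundary inequality $(k-1)f(k)\le(\bk-1)f(\bk)$ for sequences that satisfy a recursion $f(j+1)=jf(j)-c$ on a tail interval with pinning $(\bk-1)f(\bk)=c$; this I would handle by a short induction along the recursion or by direct manipulation of the closed-form sums in \eqref{eq:optimalf} and \eqref{eq:solutionf}. A secondary technical point is the strict comparison with $f^\star_\bk$ in the first sub-case of (ii), which relies on the uniqueness argument outlined above and is essentially the only ingredient not supplied directly by Proposition~\ref{thm:optimalf}.
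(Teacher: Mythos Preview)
Your proposal is correct and follows the same high-level route as the paper: reduce every $\poa$ comparison to a comparison of $\rchi$ values via Proposition~\ref{thm:optimalf}(i), then exploit the fact that both $f^\star_{\bk}$ and $\fr_\p$ satisfy their defining recursions with all inequality constraints tight. The paper organizes the same observations into Lemmas~\ref{lemma:redblueineq} and~\ref{lemma:worsechi}.

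There are two genuine differences worth noting. First, for the boundary inequality $(k-1)f(k)\le c$ (with $c=\rchi(f,\bk)$), you propose proving monotonicity of $j\mapsto(j-1)f(j)$ by induction along the recursion. The paper instead uses the one-line observation $(k-1)f(k)=kf(k)-f(k)\le kf(k)-f(k+1)=c$, which only needs that $f$ is non-increasing; this is simpler and avoids the stronger monotonicity claim altogether. Second, for the strict inequality $\rchi(\fr_\p,\bk)>\rchi(f^\star_\bk,\bk)$, the paper devotes Lemma~\ref{lemma:worsechi} to an explicit algebraic induction on $\p$ comparing the closed-form expressions \eqref{eq:optimalchi} and \eqref{eq:solutionchi}. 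Your uniqueness argument is considerably cleaner: since the recursion $f(j+1)=jf(j)-c$ with $f(1)=1$ determines $f$ uniquely once $c$ is fixed, and the terminal constraint $(\bk-1)f(\bk)=c$ then pins down $c=1/D_\bk$, the minimizer of $\rchi(\cdot,\bk)$ over admissible rules is unique and equal to $f^\star_\bk$; hence $\fr_\p\neq f^\star_\bk$ forces strict inequality. This is a legitimate shortcut, though you should spell out that the recursive lower bound on $f(\bk)$ coincides with the upper bound $c/(\bk-1)$ precisely at $c=1/D_\bk$, which is what forces uniqueness.
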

The proof can be found in Appendix \ref{app:A}. 

\begin{rmk*}
Claim i) in Theorem \ref{thm:gainsvslosses} shows that the performance of the distribution $f^\star_\bk$ on the class of games with cardinality equal to $k$
is independent on the actual value of $k$, as for any $k\le \bk$, such performance is governed by $\poa(f^\star_\bk,\bk)$. Claim ii) in Theorem \ref{thm:gainsvslosses} ensures that the distribution $\fr_\p$ 
outperforms $f^\star_\bk$ for $\bk>\p\ge k$ and the opposite when $\p< k$.
In each of these cases the performance is independent on the actual value of $k$, but only depends on whether $\p$ is above or below $k$.
Loosely speaking, if we underestimate $k$ by designing $\p<k$, the performance guarantees offered by $\fr_\p$ are worse than what $f^\star_\bk$ can achieve. The reverse holds in the case when we overestimate the cardinality as in $\bk>\p\ge k$.  
\end{rmk*}

\begin{figure}[h!] 
\begin{center} 
\setlength\figureheight{3.2cm} 
\setlength\figurewidth{6cm} 
\definecolor{mycolor1}{rgb}{0.00000,0.44700,0.74100}%

\pgfplotsset{
        compat=1.7,
        my ybar legend/.style={
            legend image code/.code={
                \draw [##1] (0cm,-0.6ex) rectangle +(2em,1.5ex);
            },
        },
    }

\begin{tikzpicture}

\begin{axis}[%
ybar,
bar width = 0.3cm,
width=\figurewidth,
height=\figureheight,
scale only axis,
xmin=1.5,
xmax=5.5,
xtick={2,3,4,5},
xlabel={$\p$},
ymin=-1.2,
ymax=0.4,
yminorticks=true,
ytick={-1,-0.75,-0.5,-0.25,0,0.25},
title={Performance of $\fr_\p$ relative to $f^\star_\bk$, $k=3$},
legend pos=north east,
my ybar legend,
bar shift = 0pt,
]

\addplot [color=black,fill=red!20,solid,line width=.7pt, mark options={solid}]
  table[row sep=crcr]{%
2	-1\\
};

\addplot [color=black,fill=green!20,solid,line width=.7pt, mark options={solid}]
  table[row sep=crcr]{%
3	0.0997558358386140\\
4	0.0125192580420671\\
5	0.00148934590687065 \\
};

\node[] at (50,10) { \footnotesize -1};
\node[] at (150,110) { \footnotesize 0.099};
\node[] at (250,110) { \footnotesize 0.012};
\node[] at (350,110) { \footnotesize 0.001};

\end{axis}
\end{tikzpicture}%
\caption{The bars represent the difference $\poa(\fr_\p,k)-\poa(f^\star_\bk,k)$, normalized by its largest value. As such, it describes the normalized difference in performance between $\fr_\p$ and $f^\star_\bk$ for various values of $2\le\p\le5$, $k=3$, $\bk=6$.}
\label{fig:gainsvslosses2}
\end{center}
\end{figure}

In Figure \ref{fig:gainsvslosses2} we compare the performance of $\fr_\p$ with the performance of $f^\star_\bk$. 
It is important to note that the performance degradation (incurred whenever  $\p<k$) always \emph{dominates} the potential gains (achieved when $\p\ge k$). This is also exemplified in Table \ref{tb:performance} and motivates the next section where we will introduce a dynamic algorithm capable of offering the benefits of $f^\star_k$ without the knowledge of $k$.

\begin{table}[h!]
\begin{center}
\caption{Comparison between $\fr_\p$ and $f^\star_{k_u}$ for $2\le k_d\le 5$, $k=3$, $k_u=6$.}\label{tb:performance}
\begin{tabular}{cc}
$\p$ & $\frac{\poa(\fr_\p,k)-\poa(f^\star_{k_u},k)}{\poa(f^\star_{k_u},k)}$ in $\%$ \\
\hline\\[-1.5mm]
\hspace*{-1.5mm}   
$2$ & $-6.727$ \\[0.5mm]
$3$ & $~~0.670$ \\ [0.5mm]
$4$ & $~~0.083$ \\ [0.5mm]
$5$ & $~~0.009$ \\ \hline
\end{tabular}
\end{center}
\end{table}
%
\section{Beyond the Risk-Reward tradeoff}
\label{sec:dynamic}
The previous section has focused on the design of a distribution rule when an upper bound on the true cardinality is known. We have demonstrated how $f^\star_\bk$ guards against worst case performance while $\fr_\p$ could give potential benefits, but comes with a certain degree of risk. In both cases the performance is equal or inferior to what we could achieve if we knew the true cardinality.

In this section we show how to overcome such difficulties, when we are given a game $G\in\gfk$ with \emph{unknown} cardinality $k$. We propose a distributed and asynchronous implementation of the best response algorithm that dynamically updates which distribution rule to use. The upshot is that we guarantee an equal or superior performance to what we could achieve if we knew $k$.

In the following, we allow distribution rules to depend on an additional variable $x_r\in [n]$ defined for $r\in\mc{R}$, which we will dynamically update to coordinate the agents. In particular, we generalise the utilities of \eqref{eq:defutility} to 
\be
\label{eq:defutilityspecific}
u^i(\a{i},\a{-i};x)\coloneqq\sum_{r\in \a{i}}v_r f(x_r,|a|_r)\,,
\ee
where $x=\{x_r\}_{r\in\mc{R}}$ and $f:[n]\times[n]\rightarrow \mb{R}_{\ge0}$ might be different across the resources, depending on the value of $x_r$. 

One could question whether the improved performance we will obtain comes from the additional degree of freedom introduced allowing resource specific distribution rules. Nevertheless \cite{Gairing09} shows that it is not the case, in that the best resource specific and non resource specific distribution perform equally (in the worst case sense). The only rationale to introduce resource dependent rules is the distributability of the algorithm. Indeed, similar results could have been achieved dynamically updating a single distribution rule shared by all resources, but such algorithm would have not been distributed.

\subsection{Algorithm description and distributedness}
In the following $t\in\mb{N}_0$ describes the time step of the algorithm and $a_t\in\mc{A}$ the corresponding allocation. With slight abuse of notation,
for every resource $r\in \mc{R}$ we introduce the quantity $x_r(t)$ that associates $r\in\mc{R}$ to the maximum number of agents that chose such resource until time $t\in\mb{N}_0$. Further, we define $\falg:[n]\rightarrow\mb{R}_{\ge0}$ for every $\m\in\mb{N}$ as a distribution rule\footnote{The rule $\falg$ is a valid distribution rule, being non increasing and such that $\falg(1)=1$. It will in general not satisfy $j\cdot\falg(j)\le 1$, but this was neither requested, nor has relevance in the design of engineering systems.} matching the optimal in equation \eqref{eq:optimalf} for $j\in[\m]$ and constant in between $[\m,n]$
\be
\falg(j)\coloneqq\begin{cases}
f^\star_\m(j)\hspace*{3mm} j\in[\m]\,,\\
f^\star_\m(\m)\quad j\in[\m+1,n]\,.
\end{cases}
\label{eq:falg}
\ee
\begin{algorithm}[h!]
\caption{Asynchronous cardinality learning}\label{alg:learning}
\begin{algorithmic}[1]
\State Initialise $a_0$;\quad $t\gets 0$;\quad$x_\ri(t)\gets|a_0|_r$~~\hfill$\forall \ri\in\mR$
\vspace*{\algbrake}\vspace*{\algbrake}
\While {not converged}
\vspace*{\algbrake}
\LeftComment{Best response}
\State $i\gets 1 +  (t\mod n)$
\State $\a{i}_{t+1}\gets \arg\max_{\a{i}\in\mathcal{A}_i} \sum_{r\in \a{i}}v_r f^{\rm alg}_{x_\ri(t)}(|a_t|_r)$
\State  $\a{i}_{t+1}\gets (\a{i}_{t+1},\a{-i}_t)$

\vspace*{\algbrake}
\LeftComment{Update $k_t$ and thus $f$ on every resource}
\State $x_\ri(t+1) \gets \max\{x_\ri(t) ,{|a_{t+1}|_{\ri}}\}$\hfill$\forall \ri\in\mR$

\State $t\gets t+1$
\EndWhile
\end{algorithmic}
\end{algorithm}

\vspace*{3mm}
Through the additional variable $x_\ri(t)$, the algorithm keeps track of the maximum number of players that visited every resource until the current time $t$, and selects consequently a resource specific distribution rule. In particular on every $r\in\mc{R}$, the algorithm uses $\falg$ with $\ell$ set to the maximum number of players that visited that resource until time $t$ (lines $4$ and $6$).
Following a round-robin rotation, players $i$ is selected to best respond and update the allocation (lines $3$ to $5$). The procedure repeats until convergence.\footnote{{Note that the best response strategy is not guaranteed to be unique. To overcome this issue, in the following we assume the existence of a tie-breaking rule selecting a single optimal allocation, should these be multiple. 
Nevertheless, we observe that neither this, nor requiring players to best respond in a round-robin is fundamental. It is still possible to show that Algorithm \ref{alg:learning} converges almost surely if the players best responding are uniformly randomly selected form $[n]$ and a single optimal allocation is uniformly randomly extracted from the set of best responses. This will produce a totally asynchronous algorithm.}}

The algorithm is distributed in the sense that every agent needs to keep track of $x_r(t)$ only for those resources he has access to i.e. for $r\in \A{i}$. Further, it is asynchronous as players need not to update their allocation in a specified order, but can spontaneously revise their strategies (see footnote 6). It is important to highlight that the communication requirements of Algorithm \ref{alg:learning} are the same of those needed by the best response algorithm applied for instance to distribution rules $f^\star_\bk$ or $\fr_\p$. That is, Algorithm \ref{alg:learning} better exploits the information that is already available.

In Figure \ref{fig:secondcomparison} we compare the distribution $f^\star_k$ with $f^{\rm alg}_{x_r^\infty}$, where $x_r^\infty=\lim_{t\to\infty} x_r(t)$. We exemplify such comparison for some of the allowed values $1\le x_r^\infty\le k$.
\begin{figure}[!ht]
    \centering
    \setlength\figureheight{5cm} 
	\setlength\figurewidth{6cm} 
    \definecolor{mycolor1}{rgb}{0.85000,0.32500,0.09800}%
\begin{tikzpicture}

\begin{axis}[%
width=\figurewidth,
height=\figureheight,
at={(1.011111in,0.641667in)},
scale only axis,
xmin=1,
xmax=7,
xtick={1,2,3,4,5,6,7},
xlabel={$j$},
ymin=0,
ymax=1,
ylabel={$f(j)$},
ylabel style = {rotate=-90},
ticklabel style = {font=\large},
]
\addplot [color=blue,line width=0.5pt,]
  table[row sep=crcr]{
1   1.000000000000000\\
2   0.418181818181818\\
3   0.254545454545455\\
4   0.181818181818182\\
5   0.145454545454545\\
   };
\addlegendentry{$f^\star_{5}$};
\addplot [only marks, color=black,line width=0.5pt,mark=x,mark options={solid}]
  table[row sep=crcr]{
1   1.000000000000000\\
2   0.5\\
3   0.5\\
4   0.5\\
5   0.5\\
6   0.5\\
7   0.5\\
   };
\addlegendentry{$\falg_{2}$};
\addplot [only marks, color=black,line width=0.5pt,mark=o,mark options={solid}]
  table[row sep=crcr]{
1   1.000000000000000\\
2   0.428571428571429\\
3   0.285714285714286\\
4   0.285714285714286\\
5   0.285714285714286\\
6   0.285714285714286\\
7   0.285714285714286\\
   };
\addlegendentry{$\falg_{3}$};
\addplot [only marks, color=black,line width=0.5pt,mark=triangle,mark options={solid}]
  table[row sep=crcr]{
1   1.000000000000000\\
2   0.418181818181818\\
3   0.254545454545455\\
4   0.181818181818182\\
5   0.145454545454545\\
6   0.145454545454545\\
7   0.145454545454545\\
   };
\addlegendentry{$\falg_{5}$};
\end{axis}
\end{tikzpicture}%
    \caption{Comparison between $f^\star_k$ and $f^{\rm alg}_{x_r^\infty}$, in the case of $k=5$, $n=7$, for different values $x_r^\infty=\{2, 3, 5\}$.}
    \label{fig:secondcomparison}
  \end{figure}
  
\newpage  
For ease of exposition we have presented the case where the distribution rules depend on the history $x_r(t)$, but the same across the players. It is simple to extend these results to the case of agent specific distribution rules. Every player would use a resource specific distribution rules that depend on the maximum number of players that visited every resource up until \emph{his} last visit. Similar convergence guarantees and performance certificates will follow.
\subsection{Convergence and quality of equilibria}
The following theorem is the main result of this section.  Claim i) shows convergence of Algorithm \ref{alg:learning} to a Nash equilibrium. Claim ii) proves that the quality of such equilibrium is higher or equal to what the optimal distribution $f^\star_k$ could achieve. 
\begin{thm} Consider a covering game $G$ with cardinality $k$.
\label{thm:alg}
\begin{itemize}
\item[i)]
Algorithm \ref{alg:learning} converges in a finite number of steps to $\ae\coloneqq\lim_{t\to\infty}a_t$ $\in\mc{A}$. The allocation $\ae$ is a Nash equilibrium of the game with resource specific distribution rules fixed to $f^{\rm alg}_{x_\ri^\infty}$ for $\ri\in\mR$, where $x_\ri^\infty\coloneqq\lim_{t\to\infty} x_\ri(t)$.\\
\item[ii)] Let $\km\coloneqq \max_{\ri\in\mR}x_\ri^\infty$. The quality of the equilibrium $\ae$ is higher than $\poa({f^\star_{\km}},{\km})$ and thus of $\poa({f^\star_{k}},{k})$ 
\be\frac{W(\ae)}{W(\aopt)}\ge \poa({f^\star_{\km}},{\km})\ge\poa({f^\star_{k}},{k})\,,
\ee
where $\aopt\in\arg\max_{a\in\mathcal{A}} W(a)$.
\end{itemize}
\end{thm}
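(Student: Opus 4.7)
My plan rests on the fact that $x_r(t)$ is integer-valued, non-decreasing in $t$, and upper-bounded by the cardinality $k$ (since $|a_t|_r \le k$ for every $t$). Hence each $x_r(t)$ stabilizes at $x_r^\infty$ after finitely many updates, and we may fix a common stabilization time $T^\star$. For $t \ge T^\star$ Algorithm~\ref{alg:learning} reduces to round-robin best response on the \emph{fixed} game whose utilities are given by \eqref{eq:defutilityspecific} with resource-specific rules $f_r := f^{\rm alg}_{x_r^\infty}$. This game is an exact potential game (being of congestion type with non-increasing, resource-dependent rules, as in \cite{MardenArslan09}), so each strict improvement strictly increases the potential; since $\mathcal{A}$ is finite, best response terminates at a pure Nash equilibrium, which is the claimed $\ae$.

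\textbf{Part (ii) (quality).} I would proceed by a smoothness-style argument tailored to resource-specific rules. Introduce $n_r^{\rm e} := |\ae|_r$, $n_r^* := |\aopt|_r$, and $\chi := (\km-1)\,f^\star_{\km}(\km)$, which equals $\rchi(f^\star_{\km},\km)$ by Proposition~\ref{thm:optimalf}(ii). Summing the Nash inequalities $u^i(\ae)\ge u^i(\aopt_i, \ae_{-i})$ over $i$ and using that each $f_r$ is non-increasing to bound $f_r(|(\aopt_i,\ae_{-i})|_r)\ge f_r(n_r^{\rm e}+1)$ (with the natural constant extension at the boundary $n_r^{\rm e}+1>n$), one obtains
\[
\sum_{r\in\mathcal{R}} v_r \bigl[n_r^{\rm e} f_r(n_r^{\rm e}) - n_r^* f_r(n_r^{\rm e}+1)\bigr]\ \ge\ 0.
\]
It then suffices to establish the resource-wise inequality
\[
\mathbf{1}[n_r^* \ge 1]\ +\ n_r^{\rm e} f_r(n_r^{\rm e})\ -\ n_r^* f_r(n_r^{\rm e}+1)\ \le\ (1+\chi)\,\mathbf{1}[n_r^{\rm e} \ge 1],
\]
for then summing with weights $v_r$ and combining with the displayed Nash inequality yields $W(\aopt) \le (1+\chi)\,W(\ae)$, i.e., $W(\ae)/W(\aopt) \ge 1/(1+\chi) = \poa(f^\star_{\km},\km)$. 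The second inequality $\poa(f^\star_{\km},\km) \ge \poa(f^\star_k,k)$ is then immediate from $\km \le k$ and the monotonicity in $k$ of $\poa(f^\star_k,k)$ stated in Proposition~\ref{thm:optimalf}(iii).

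\textbf{Main obstacle.} Verifying the resource-wise inequality is the technical core of the proof. I would exploit three structural facts: (a) $n_r^{\rm e} \le x_r^\infty \le \km$ by construction of the algorithm; (b) on $[x_r^\infty]$ the rule $f^{\rm alg}_{x_r^\infty}$ coincides with $f^\star_{x_r^\infty}$ and hence satisfies the budget-balance $j f^\star_{x_r^\infty}(j)\le 1$ proved in \cite{Gairing09}; and (c) since $\poa(f^\star_k,k)$ is decreasing in $k$ (Proposition~\ref{thm:optimalf}(iii)), the quantity $(k-1) f^\star_k(k)$ is non-decreasing in $k$, which yields $\rchi(f^\star_{x_r^\infty}, x_r^\infty) \le \chi$. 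With these facts in hand, a case split on $(n_r^{\rm e}, n_r^*)$ closes the inequality: when $n_r^{\rm e}=0$ it follows from $f_r(1)=1$; when $n_r^{\rm e}\ge 1$ and $n_r^*=0$ from (b); when $1\le n_r^{\rm e}<x_r^\infty$ and $n_r^*\ge 1$, from $-n_r^* f_r(n_r^{\rm e}+1)\le -f_r(n_r^{\rm e}+1)$ together with $n_r^{\rm e} f^\star_{x_r^\infty}(n_r^{\rm e})-f^\star_{x_r^\infty}(n_r^{\rm e}+1)\le (x_r^\infty-1)f^\star_{x_r^\infty}(x_r^\infty)$ (a direct consequence of the definition of $\rchi(f^\star_{x_r^\infty},x_r^\infty)$ and Proposition~\ref{thm:optimalf}(ii)) followed by (c); and when $n_r^{\rm e}=x_r^\infty$, from the constant-tail structure of $f^{\rm alg}_{x_r^\infty}$, which reduces the left-hand side to $(x_r^\infty-n_r^*)f^\star_{x_r^\infty}(x_r^\infty)\le\chi$. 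The delicate point throughout is aligning each resource's local smoothness parameter $\rchi(f^\star_{x_r^\infty},x_r^\infty)$ with the single uniform constant $\chi$, and this is precisely where (c) plays its role.
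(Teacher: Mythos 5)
Your proposal is correct and follows essentially the same route as the paper: part (i) is the identical argument (each $x_\ri(t)$ is integer, non-decreasing and bounded by $k$, after which the algorithm is round-robin best response on a fixed congestion/potential game), and part (ii) is the same summed-Nash-inequality comparison of $\ae$ with $\aopt$, your resource-wise smoothness inequality playing exactly the role of the paper's Lemma \ref{lemma:keypoints} and resting on the same key facts ($|\ae|_r\le x_\ri^\infty\le \km\le k$, coincidence of $\falgkinfi$ with $f^\star_{x^\infty_\ri}$ on $[x^\infty_\ri]$, the constant tail, and monotonicity of $\rchi(f^\star_k,k)$ in $k$ via Proposition \ref{thm:optimalf}(iii)). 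The only cosmetic differences are that you retain the multiplicity $|\aopt|_r$ where the paper immediately lower-bounds it by $1$, and that you invoke the budget-balance bound $j f^\star(j)\le 1$ in one case where the paper uses the weaker bound $j f(j)\le 1+\rchi$; neither affects the validity of the argument.
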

These statements hold for any initial condition, even if the allocation to which the Algorithm \ref{alg:learning} converges may be different.
The proof is detailed in Appendix \ref{app:A}.
\begin{rmk*}
The reason for which the proposed algorithm gives a performance that is on par or better compared to what offered by $f^\star_k$ is, informally, in the structure of the equilibria induced by $f^{\rm alg}_{x_r^\infty}$. More precisely, i) for each equilibrium $a_e$, the number of agents selecting resource $r$ is $|a_e|_r\le x_r^\infty \le k$; and ii) at convergence resource $r$ is using $f^{\rm alg}_{x_r^\infty}$ (with $x_r^\infty\le k$), which exactly matches the optimal $f^\star_{x_r^\infty}$ for $j\le x_r^\infty$. This is enough to guarantee an improved performance. The proof of Theorem \ref{thm:alg} makes this reasoning formal.
\end{rmk*}
\subsection{Instance by instance analysis}
The previous theorem shows that Algorithm \ref{alg:learning} achieves a higher or equal worst case performance than the optimal distribution $f^\star_k$.
While worst case analysis has been and still is a fruitful tool to measure and improve on algorithms' performance, the computer science community has recently showed interest in moving beyond it \cite{BalcanManthey15}.
Inspired by this, the question arises as to whether Algorithm \ref{alg:learning} performs better than $f^\star_k$, instance by instance.
More formally, we would like to understand if Algorithm \ref{alg:learning} yields higher welfare than the optimally designed rule on all the remaining instances (the non worst case ones). We show that neither this, \mbox{nor the converse holds.}
 
\begin{thm}
\label{thm:counterex}
Let $C$ be an instance of covering problem defined in Section \ref{sec:model}. Further denote with $\nashealg{C}$ the set of equilibria obtained using Algorithm \ref{alg:learning} on $C$, and $G^\star=\{C,f^\star_k\}$ the associated game where the optimal distribution $f^\star_k$ has been selected.
\begin{itemize} 
	\item[]
	\item[i)] There exists an instance $C$ of the covering problem such that 
	\[
	\min_{a\in \nashealg{C}}W(a)>	\min_{a \in \nashe{G^\star}}W(a)\,.
	\]
\item[ii)] There exists an instance $C$ of the covering problem such that 
	\[
	\min_{a\in \nashealg{C}}W(a)<	\min_{a \in \nashe{G^\star}}W(a)\,.
	\]
\end{itemize}

\end{thm}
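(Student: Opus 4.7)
My plan is to establish each part by explicit construction of a small covering instance that can be analyzed by finite enumeration. The mechanism driving both claims is the history-dependence of Algorithm~\ref{alg:learning}: because the adaptive rule $f^{\rm alg}_{x_\ri(t)}$ at resource $r$ is indexed by the running maximum $x_\ri(t)$ rather than by a fixed parameter, the set $\nashealg{C}$ can sit strictly inside or strictly outside of $\nashe{G^\star}$, depending on how the best-response dynamics interact with the specific instance.

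For part (i), I would start from a covering instance that is tight for $\poa(f^\star_k,k)$, i.e.\ in which $G^\star=\{C,f^\star_k\}$ admits a Nash equilibrium $a_{\rm bad}$ (approximately) saturating the price-of-anarchy bound in Proposition~\ref{thm:optimalf}(i). In any such instance, some resource $r^\star$ carries $|a_{\rm bad}|_{r^\star}=k$, and $a_{\rm bad}$ is sustained precisely because $f^\star_k$ is tuned for worst-case cardinality $k$. The crux of the argument is to check that no trajectory of Algorithm~\ref{alg:learning} on this instance can produce $x_{r^\star}^\infty=k$: since the adaptive rule coincides with $f^\star_\ell$ on its first $\ell$ entries whenever $x_{r^\star}^\infty=\ell$, one can arrange action sets and valuations so that any best-response sequence keeps the running maximum $\km=\max_\ri x_\ri^\infty$ strictly below $k$. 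Combining Theorem~\ref{thm:alg}(ii) with the strict monotonicity of $\poa(f^\star_k,k)$ in $k$ (Proposition~\ref{thm:optimalf}(iii)) then yields
\[
\frac{\min_{a\in \nashealg{C}}W(a)}{W(\aopt)}\ge \poa(f^\star_{\km},\km)>\poa(f^\star_k,k)=\frac{W(a_{\rm bad})}{W(\aopt)},
\]
which gives the desired strict inequality.

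For part (ii), the construction is opposite in spirit. I would build a covering instance together with a particular initialization $a_0$ such that early best-response rounds inflate $x_\ri(t)$ at some resource $r$ to a value $\ell^\star$ strictly larger than $|a_e|_\ri$ at the eventual equilibrium. Once this happens, $r$ is permanently priced using $f^{\rm alg}_{\ell^\star}$, whose tail on $[\ell^\star+1,n]$ is constant and differs from $f^\star_k$. By tailoring the action sets and valuations, one can engineer this mismatch so that it supports a low-welfare Nash equilibrium of the limiting adaptive game that is \emph{not} a Nash equilibrium of $G^\star$, because under the uniform rule $f^\star_k$ some agent would strictly prefer a deviation ruled out by $f^{\rm alg}_{\ell^\star}$. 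A finite enumeration of the action profiles of the chosen instance would then rule out comparably low-welfare equilibria of $G^\star$.

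The main obstacle is combinatorial rather than conceptual: for each part the instance has to simultaneously control (a) the Nash conditions under the static rule $f^\star_k$, (b) the Nash conditions under the resource-specific adaptive rule at its limit values $\{x_\ri^\infty\}$, and (c) the transient trajectory of Algorithm~\ref{alg:learning} that determines those limits. Keeping the instances small enough to allow exhaustive verification while still producing strict comparisons is the delicate step; once the instances are fixed, verification reduces to direct arithmetic with the closed forms of $f^\star_k$ in \eqref{eq:optimalf} and $\falg$ in \eqref{eq:falg}, together with Proposition~\ref{thm:optimalf} and Theorem~\ref{thm:alg}(ii).
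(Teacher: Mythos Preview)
Your overall strategy---construct small explicit instances and verify by enumeration---matches the paper, and your plan for part~(ii) is essentially correct in spirit (though see the remark below). However, your proposed route for part~(i) has a genuine gap.

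\textbf{The gap in part~(i).} You want to take an instance $C$ that is tight for $\poa(f^\star_k,k)$ and argue that every trajectory of Algorithm~\ref{alg:learning} yields $\km<k$, so that Theorem~\ref{thm:alg}(ii) combined with the strict monotonicity of $\poa(f^\star_k,k)$ gives a strict improvement. But this is impossible to arrange. By definition of cardinality~\eqref{eq:carddef}, the instance $C$ admits some allocation $\bar a\in\mc{A}$ with $|\bar a|_{\bar r}=k$ for some resource $\bar r$. Since Algorithm~\ref{alg:learning} allows an arbitrary initialization $a_0$, one may set $a_0=\bar a$, which forces $x_{\bar r}(0)=k$ and hence $\km=k$ for that run. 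Theorem~\ref{thm:alg}(ii) then only yields $W(a_e)/W(\aopt)\ge\poa(f^\star_k,k)$ with equality possible, and your chain of strict inequalities collapses. More fundamentally, a PoA-tight instance is precisely one where the worst-case bound is attained, so there is no room to squeeze out a strict gap from Theorem~\ref{thm:alg}(ii) alone.

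The paper avoids this by \emph{not} using a worst-case instance. It builds a three-player, four-resource example (cardinality $k=3$) in which $G^\star$ admits a specific suboptimal equilibrium $a_e=(r_2,r_3,r_1)$, and then verifies by exhaustive enumeration that \emph{every} run of Algorithm~\ref{alg:learning}, from \emph{every} initial condition, converges to an optimal allocation. The paper explicitly notes that this instance is not PoA-tight. The key observation that makes the enumeration go through is that the suboptimal equilibrium $a_e$ of $G^\star$ is sustained only because $f^\star_3(2)$ is small enough; under the adaptive rule the relevant resource carries $x_\ri^\infty\le 2$, so $f^{\rm alg}_\ell(2)=f^\star_\ell(2)>f^\star_3(2)$ for $\ell\le 2$, and the deviation becomes profitable. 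This is a direct, instance-specific argument rather than an appeal to the worst-case bound.

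\textbf{A minor correction for part~(ii).} Your description of the mechanism is slightly off. In the paper's construction the limit satisfies $x_\ri^\infty=|a_e|_\ri$ at the congested resource (not strictly larger), and the discrepancy with $f^\star_k$ occurs on the \emph{head} of the rule, not the tail: one has $f^{\rm alg}_{x_\ri^\infty}(j)=f^\star_{x_\ri^\infty}(j)>f^\star_k(j)$ for $j\le x_\ri^\infty<k$, which makes congestion \emph{more} attractive under the adaptive rule than under $f^\star_k$. This is what supports a low-welfare equilibrium for the algorithm that is not an equilibrium of $G^\star$. Your enumeration plan for the $G^\star$ side is otherwise fine.
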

The proof is constructive and is presented in Appendix \ref{app:A}.
Note that both statements in Theorem \ref{thm:counterex} compare the performances of a given covering problem $C$ and associated game $G^\star$. Observe that this metric is significantly different from \eqref{eq:poadef}, where we additionally take the infimum over problems with cardinality equal to $k$.
\section{Simulations}
\label{sec:simulations}
In this section we provide simulations to compare the performance of different distribution rules. 

For this numerical study, we consider the problem of distributed data caching introduced in \cite{goemans2006market} as a technique to reduce peak traffic in mobile data networks. In order to alleviate the growing radio congestion caused by the recent surge of mobile data traffic, the latter work suggested to store popular and spectrum intensive items (such as movies or songs) in geographically distributed stations. The question we seek to answer is how to distribute the popular items across the nodes of a network so as to maximize the total number of queries fulfilled.
More in details, we consider a square grid of $800 \times 800$ bins and a set of geographically distributed agents $N$ (the stations) with position $P_i$. Additionally, we consider a set $\mc{R}$ of data items, where each item has query rate $q_r$ and is also geographically distributed with position $O_r$. The allocation set of agents $i$ is $\mc{A}_i\subseteq 2^{\mc{R}_i}$, where $\mc{R}_i$ are the set of resources $r\in\mc{R}$ that are sufficiently close to the considered station. Formally,  $\mc{R}_i=\{r\in\mc{R}\,s.t.\,||O_r-P_i||\le \rho\}$. In addition, we require that $|a_i|\le l_i$ for all $a_i\in\mc{A}_i$ as the storage capacity is limited in each station. The goal is to select a feasible allocation on every station so as to jointly maximize the number of queries fulfilled i.e.
\[
W(a)=\sum_{r\,:\,|a|_r\ge 1} q_r.
\] 
In the following we consider $|N|=150$, $|\mc{R}|=1500$, $\rho=50$, $l_i=10$ for all $i$. Data items are randomly located in the grid (with uniform distribution), while the corresponding query rates follow the Zipf distribution\footnote{Query rates {approximately follow this distribution, as shown in \cite{breslau1999web}.}} $q_r = 1/r^\alpha$ with $\alpha = 0.6$.
 The stations are uniformly distributed, on the grid. All the instances considered have cardinality $k=3$.
 We considered $5\cdot 10^4$ instances of this problem, and for every instance we computed an equilibrium allocation using the best response algorithm in conjunction with $f'_{k_d}$, $f^\star_\bk$, $f^\star_k$ or Algorithm \ref{alg:learning}. Given the size of the problem, it is not possible to compute the optimal allocation and thus the price of anarchy. As a surrogate for the latter we use the ratio between $W(\ae)$ and $W_{\rm tot}$, where $W_{\rm tot}=\sum_{r \in \mc{R}} q_r$ is the total value of queries.  This ratio provides a lower bound for the true price of anarchy as $W(\aopt)\le W_{\rm tot}$. Observe that $W_{\rm tot}$ is constant throughout any instance considered as $W_{\rm tot}=\sum_{r\le 1500} 1/r^\alpha$. Thus, it is possible to compare the performance across different instances by looking at $W(\ae)/W_{\rm tot}$.

In Figure \ref{fig:comparisonpdf} (top) we compare the empirical distribution of the ratio $W(\ae)/W_{\rm tot}$ for the rules that use no information about the true cardinality i.e. $f'_{k_d}$, $f^\star_\bk$ for $k_d=2$ and $\bk=5$. 
In Figure \ref{fig:comparisonpdf} (bottom) we compare the performance of the worst case optimal distribution $f^\star_k$ with that of our learning Algorithm \ref{alg:learning}.
The worst case ratio $W(\ae)/W_{\rm tot}$ ever encountered for each case is represented in Figure \ref{fig:comparisonpdf} with a marker, and is also reported in the following table. Additionally, in Table \ref{tb:compperformance} we show the maximum, minimum, and average number of best response rounds.\footnote{Observe that in each best response round \emph{all} the agents have a chance to update their allocation.}

\begin{figure}[h!] 
\begin{center} 
\setlength\figureheight{3.2cm} 
\setlength\figurewidth{6cm} 
\begin{tikzpicture}

\begin{axis}[%
width=\figurewidth,
height=\figureheight,,
at={(1.011in,0.642in)},
scale only axis,
xmin=0.88,
xmax=0.95,
xlabel={$W(\ae)/W_{\rm tot}$},
ymin=-300,
ymax=5000,
unbounded coords=jump,
scaled y ticks=base 10:-3,
ylabel={Count},
ylabel style={rotate=-90},
grid=both,
legend style={at={(0.05,0.8)}, anchor=west, draw=white!15!black},
legend cell align={left}
]
\addplot[color=green, mark = triangle, mark size = 3pt, mark options={solid}, line width = 1pt]
  table[row sep=crcr]
  {0.8823	1\\};
\addlegendentry{$f'_{k_d}$\hspace*{-1mm}};
\addplot[color=red, mark = x, mark size = 3pt, line width = 1pt]
  table[row sep=crcr]{%
0.9052	0\\};
\addlegendentry{$f^\star_{k_u}$};
\addplot[ybar interval, fill=green, fill opacity=0.6, draw=black, area legend] table[row sep=crcr] {%
x	y\\
0.8823  1\\
0.8994	2\\
0.90005	4\\
0.9007	7\\
0.90135	15\\
0.902	23\\
0.90265	54\\
0.9033	101\\
0.90395	175\\
0.9046	275\\
0.90525	401\\
0.9059	656\\
0.90655	938\\
0.9072	1383\\
0.90785	1930\\
0.9085	2398\\
0.90915	2811\\
0.9098	3357\\
0.91045	3823\\
0.9111	4176\\
0.91175	4250\\
0.9124	4081\\
0.91305	3839\\
0.9137	3440\\
0.91435	2999\\
0.915	2373\\
0.91565	1871\\
0.9163	1478\\
0.91695	1093\\
0.9176	795\\
0.91825	497\\
0.9189	332\\
0.91955	198\\
0.9202	111\\
0.92085	61\\
0.9215	26\\
0.92215	16\\
0.9228	8\\
0.92345	1\\
0.9241	1\\
0.92475	1\\
0.9254	1\\
};
\addplot[ybar interval, fill=red, fill opacity=0.6, draw=black, area legend] table[row sep=crcr] {%
x	y\\
0.9052  1\\
0.9186	1\\
0.91986	2\\
0.92049	2\\
0.92112	9\\
0.92175	8\\
0.92238	19\\
0.92301	38\\
0.92364	57\\
0.92427	77\\
0.9249	211\\
0.92553	323\\
0.92616	524\\
0.92679	778\\
0.92742	1151\\
0.92805	1649\\
0.92868	2257\\
0.92931	2820\\
0.92994	3391\\
0.93057	3822\\
0.9312	4255\\
0.93183	4444\\
0.93246	4311\\
0.93309	4222\\
0.93372	3746\\
0.93435	3089\\
0.93498	2580\\
0.93561	2016\\
0.93624	1485\\
0.93687	992\\
0.9375	709\\
0.93813	451\\
0.93876	224\\
0.93939	160\\
0.94002	91\\
0.94065	41\\
0.94128	27\\
0.94191	13\\
0.94254	3\\
0.94317	2\\
0.9438	2\\
};
\end{axis}
\end{tikzpicture}%
\begin{tikzpicture}

\begin{axis}[%
width=\figurewidth,
height=\figureheight,,
at={(1.011in,0.642in)},
scale only axis,
xmin=0.88,
xmax=0.95,
xlabel={$W(\ae)/W_{\rm tot}$},
ymin=-300,
ymax=5000,
unbounded coords=jump,
scaled y ticks=base 10:-3,
ylabel={Count},
ylabel style={rotate=-90},
grid=both,
legend style={at={(0.05,0.8)}, anchor=west, draw=white!15!black},
legend cell align={left}
]
\addplot[color=black, mark = asterisk, mark size = 3pt, mark options={solid}, line width = 1pt]
  table[row sep=crcr]
  {0.9125	1\\};
\addlegendentry{$f^\star_k$\hspace*{-1mm}};
\addplot[color=magenta, mark = o, mark size = 3pt, line width = 1pt]
  table[row sep=crcr]{%
0.9186	0\\};
\addlegendentry{Alg. \ref{alg:learning}};
\addplot[ybar interval, fill=black, fill opacity=0.6, draw=black, area legend] table[row sep=crcr] {%
x	y\\
0.9125	1\\
0.9181	1\\
0.9187	3\\
0.9193	12\\
0.9199	14\\
0.9205	28\\
0.9211	45\\
0.9217	64\\
0.9223	152\\
0.9229	215\\
0.9235	366\\
0.9241	551\\
0.9247	842\\
0.9253	1155\\
0.9259	1646\\
0.9265	2114\\
0.9271	2743\\
0.9277	3162\\
0.9283	3755\\
0.9289	3909\\
0.9295	4126\\
0.9301	4005\\
0.9307	4035\\
0.9313	3684\\
0.9319	3258\\
0.9325	2686\\
0.9331	2164\\
0.9337	1668\\
0.9343	1249\\
0.9349	829\\
0.9355	574\\
0.9361	395\\
0.9367	252\\
0.9373	139\\
0.9379	77\\
0.9385	44\\
0.9391	17\\
0.9397	12\\
0.9403	6\\
0.9409	2\\
0.9415	2\\
};
\addplot[ybar interval, fill=magenta, fill opacity=0.6, draw=black, area legend] table[row sep=crcr] {%
x	y\\
0.9186	1\\
0.91923	0\\
0.91986	1\\
0.92049	3\\
0.92112	5\\
0.92175	7\\
0.92238	15\\
0.92301	27\\
0.92364	38\\
0.92427	65\\
0.9249	149\\
0.92553	257\\
0.92616	425\\
0.92679	651\\
0.92742	962\\
0.92805	1434\\
0.92868	1970\\
0.92931	2533\\
0.92994	3089\\
0.93057	3667\\
0.9312	4052\\
0.93183	4403\\
0.93246	4302\\
0.93309	4411\\
0.93372	3967\\
0.93435	3396\\
0.93498	2832\\
0.93561	2313\\
0.93624	1682\\
0.93687	1183\\
0.9375	871\\
0.93813	553\\
0.93876	307\\
0.93939	212\\
0.94002	105\\
0.94065	52\\
0.94128	36\\
0.94191	14\\
0.94254	8\\
0.94317	2\\
0.9438	2\\
};
\end{axis}
\end{tikzpicture}%
\caption{Empirical distribution of $W(\ae)/W_{\rm tot}$ on $5\cdot 10^4$ instances of the covering problem considered; $k=3$, $k_d=2$ and $\bk=5$.}
\label{fig:comparisonpdf}
\end{center}
\end{figure}

\begin{table}[hb]
\begin{center}
\caption{Performance Comparison}\label{tb:compperformance}
\begin{tabular}{ccccc}
Algorithm & Min $W(\ae)/W_{\rm tot}$ & Min \#BR & Max \#BR & Avg \#BR\\
\hline\\[-1.5mm]
BR with $f'_{k_d}$ & 0.8823 & 2 & 5 & 3.32 \\[0.5mm]
BR with $f^\star_\bk$ & 0.9052 & 3 & 5 & 3.28\\[0.5mm]
\!\!\!\! BR with $f^\star_k$ & 0.9125  & 2 & 5 & 3.23\\ [0.5mm]
Algorithm \ref{alg:learning} & 0.9186 & 3 & 5 & 3.29  \\ \hline
\end{tabular}
\end{center}
\end{table}

First, we note that all the tested algorithms require a comparable number of best response rounds, and thus have very similar running time.
Second, we observe that $f'_{k_d}$ performs the worst among all the other distributions, both in terms of wort-case performance, and in terms of average performance.
Additionally, we note that Algorithm \ref{alg:learning} and the distribution rules $f^\star_\bk$, $f^\star_k$ perform similarly, when looking at an average instance, while $f^\star_k$ and Algorithm \ref{alg:learning} outperform $f^\star_\bk$ in terms of worst case performance with a slight advantage for Algorithm \ref{alg:learning}. The efficiency values are much higher compared to the analytical worst case, hinting at the fact that such instances are very few.
Given that the average performance is similar, but the distribution $f^\star_\bk$ is proven to have inferior worst case performance (Theorem \ref{thm:gainsvslosses}), one might want to use either the optimal distribution $f^\star_k$ or Algorithm \ref{alg:learning}. Recall indeed that the worst case performance of Algorithm \ref{alg:learning} is on par or better to $f^\star_k$ (Theorem \ref{thm:alg}). Nevertheless, the use of $f^\star_k$ requires knowledge of the cardinality $k$, while the algorithm proposed does not.

To conclude: Algorithm \ref{alg:learning} achieves similar average performances compared to $f^\star_k$, but has a better worst case performance than $f^\star_\bk$ and a better-equal worst case performance than $f^\star_k$ even if it does not require the knowledge of $k$.

\section{Conclusion}
In this work we studied how additional information impacts the optimal design of local utility functions, when the goal is to improve the overall efficiency of a class of multiagent systems. Relative to covering problems, in the first part of the manuscript we highlighted an inherent tradeoff between potential risks and rewards when such additional information is uncertain. In the second part, we showed how it is possible to fully eliminate the risks by using a distributed algorithm that dynamically updated the local utility functions.
	The methodology used suggests that similar results could be obtained for a broader class of resource allocation problems than the one studied here.

\appendices
\section{}
\label{app:A}
In the proofs presented in Appendix \ref{app:A}, we make use of Lemmas \ref{lemma:generalinequality}-\ref{lemma:keypoints}. Their presentation and proof is deferred to Appendix \ref{app:lemmata}.
\section*{Proof of Proposition \ref{prop:riskyf}}
Thanks to result i) in Proposition \ref{thm:optimalf} maximising $\poa(f,\bk)$ is equivalent to minimising $\cfbk$ and $\fr_\p$ can be computed by the following linear program (LP) in the unknowns $x$, $\{f(j)\}_{j=1}^{\bk}$
\be
\begin{split}
&\min_{x\ge0,\,f\in\mc{F}} ~x\\
&\text{s.t.}~ jf(j)-f(j+1)\le x\quad j \in [\bk-1]\,,\\
&\hspace*{5mm}(\bk-1)f(\bk)\le x\,,\\
&\hspace*{5mm}f(j) = f^\star_\p(j)\qquad  \hspace*{12mm}j\in[\p]\,.
\end{split}
\label{eq:original}
\ee
We remove the constraints $x\ge0$, $f\in\mc{F}$ as well as $ jf(j)-f(j+1)\le x$ for $j \in [\p-1]$ and introduce the following relaxed linear program
\be
\begin{split}
&\min_{x,\,f} ~x\\
&\text{s.t.}~ jf(j)-f(j+1)\le x\quad j \in [\p,\bk-1]\,,\\
&\hspace*{5mm}(\bk-1)f(\bk)\le x\,,\\
&\hspace*{5mm}f(j) = f^\star_\p(j)\qquad  \hspace*{12mm}j\in[\p]\,.
\end{split}
\label{eq:relaxed}
\ee
The proof is divided in two subproofs:\begin{itemize}
\item[i)] We show that a solution to the relaxed program \eqref{eq:relaxed} is given by \eqref{eq:solutionf} and \eqref{eq:solutionchi}.
\item [ii)] We show that the solution to the relaxed program obtained in i) is feasible for the original problem too.
\end{itemize}
\begin{proof} i) The proof proceeds by showing that a solution of \eqref{eq:relaxed} can be obtained transforming all the inequality constraint into equalities. This will produce the expressions \eqref{eq:solutionf} and \eqref{eq:solutionchi}.

Let us define $v_j=f(j)$ for $j\in[\p+1,\bk]$ and introduce the cost function $J(x,v_{\p+1},\dots,v_{\bk})=x$. We further introduce the constraint functions $g_1(x,v_{\p+1})=-x-v_{\p+1}$ and
$g_i(x,v_{p+i-1},v_{p+i})=-x+j\, v_{p+i-1}-v_{p+i}$ for $i\in[2,\bk-p]$ and  $g_{\bk-\p+1}(x,v_{\bk})=-x+(\bk-1)v_{\bk}$.
With these definitions the LP \eqref{eq:relaxed} is equivalent to the following where we have removed the decision variables that are already determined
\[
\begin{split}
&\min_{x,v_{\p+1},\dots,v_\bk} ~J(x,v_{\p+1},\dots,v_\bk)\,,\\
&\hspace*{6mm}\text{s.t.\quad } g_1(x,v_{\p+1})\le -\p f_\p^\star(\p)\,, \\
&\hspace*{13.5mm} g_i(x,v_{p+i-1},v_{p+i})\le0\qquad i\in[2,\bk-p]\,,\\
&\hspace*{13.5mm} g_{\bk-\p+1}(x,v_{\bk})\le0\,.
\end{split}
\]
Thanks to the convexity of the cost function and to the polytopic constraints, the Karush-Kuhn-Tucker conditions are necessary and sufficient for optimality \cite{boyd2004}. Consequently, a feasible point $(x^\star,v_{k+1}^\star,\dots,v_n^\star)$ is an optimiser iff there exists $\mu_i\in\mb{R}$ so that
\[\begin{split}
&\nabla J^\star+\sum_{i=1}^{\bk-\p+1}\mu_i\nabla g_i^\star=0\\
&g_i^\star\le 0, \quad \mu_i\ge0, \quad \mu_i g_i^\star = 0 \quad i\in[\bk-\p+1]
\end{split}
\]
where we used $\nabla J^\star$ to indicate $\nabla J(x^\star,v_{\p+1}^\star,\dots,v_\bk^\star)$, and similarly for $g_i^\star$, $\nabla g_i^\star$. Observe that the distribution rule in \eqref{eq:solutionf} and the corresponding $\rchi({\fr_\p},\bk)$ in \eqref{eq:solutionchi} are the unique solution of the linear system $g_i^\star=0$ for all $i\in[\bk-\p+1]$, that is 

\be
\label{eq:friskyformula}
\begin{cases}
&j\fr_\p(j)-\fr_\p(j+1)-\rchi({\fr_\p},\bk)=0\quad j \in [\p,\bk-1]\,,\\
&(\bk-1)\fr_\p(\bk)-\rchi({\fr_\p},\bk)=0 \,,\\
&\fr_\p(j) = f^\star_\p(j)\qquad  \hspace*{26.5mm}j\in[\p]\,.
\end{cases}
\ee
Primal feasibility and complementarity slackness are hence naturally satisfied. We are only left to prove that there exists $\mu_i\ge0$ such that $\nabla J^\star+\sum_{i=1}^{\bk-\p+1}\mu_i\nabla g_i^\star=0$. We proceed by writing the stationarity conditions explicitly and show that this is indeed the case. Note that both the cost function and the constraints are linear so that their derivatives are constant functions
\[\begin{split}
\nabla J^\star &= (1,0,\dots,0)\\
\nabla g_1^\star &= (-1, -1 ,0,\dots,0)\\
\nabla g_2^\star &= (-1,\p+1,-1,0,\dots,0)\\
\vdots&\\
\nabla g_{\bk-\p-1}^\star &=(-1,0,\dots,0,\bk-2,-1,0)\\
\nabla g_{\bk-\p} ^\star&=(-1,0,\dots,0,\bk-1,-1)\\
\nabla g_{\bk-\p+1}^\star &=(-1,0,\dots,0,\bk-1)
\end{split}
\]
Solving the stationarity condition in a recursive fashion starting from last component gives
\[\begin{cases}
\mu_{i}&=\mu_{\bk-\p+1}\frac{(\bk-1)(\bk-1)!}{(\p+i-1)!}\quad i\in[\bk-\p]\\
\sum_{i=1}^{\bk-\p+1}\mu_i&=1\,.
\end{cases}
\]
Substituting the first equation into the second one and solving yields
\[\begin{cases}
\mu_{i}=\biggl( \sum_{i=1}^{\bk-\p-1} \frac{(\bk-1)(\bk-1)!}{(\p+i-1)!}\biggr)^{-1} \hspace*{-3mm}\frac{(\bk-1)(\bk-1)!}{(\p+i-1)!}\hspace*{3mm} i\in[\bk-\p],\\
\mu_{\bk-\p+1}=\biggl( \sum_{i=1}^{\bk-\p-1} \frac{(\bk-1)(\bk-1)!}{(\p+i-1)!}\biggr)^{-1}\,.
\end{cases}
\]
Since $\mu_i\ge0$ for all $i\in[\bk-\p+1]$, we conclude that \eqref{eq:solutionf} and \eqref{eq:solutionchi} solve the relaxed program \eqref{eq:relaxed}.
\end{proof}
\begin{proof}
ii) The proof proceeds by showing that \eqref{eq:solutionf} and \eqref{eq:solutionchi} satisfy the constraints removed when transforming the original program \eqref{eq:original} into \eqref{eq:relaxed}.

Using \eqref{eq:solutionchi} and \eqref{eq:solutionf}, it is trivial to verify that $\rchi({\fr_\p},\bk)\ge0$ and $\fr_\p(1)=1$, $\fr_\p(j)\ge0$.
We proceed to prove that $\fr_\p$ is non increasing.
Note that for $j\le \p$, $\fr_\p$ coincides with $f^\star_\p$, which was proven to be non increasing in \cite{Gairing09}. Further, from Lemma \ref{lemma:generalinequality} we know that $
j\fr_\p(j)-(j+1)\fr_\p(j+1)\ge 0$ for $j \in[\p,\bk-1]$.
Thus
\[
j\fr_\p(j)-j\fr_\p(j+1)\ge j\fr_\p(j)-(j+1)\fr_\p(j+1)\ge 0
\]
for $j \in[\p,\bk-1]$, which guarantees that $\fr_\p$ is non increasing for $j\in[\p,\bk]$ too.

We are left to show that $ j\fr_\p(j)-\fr_\p(j+1)\le \rchi({\fr_\p},\bk)$  for $j \in [\p-1]$.
Since $j\in [\p-1]$, it holds that \[ j\fr_\p(j)-\fr_\p(j+1) = j f^\star_\p(j)-f^\star_\p(j+1).\] 
Note that $ j f^\star_\p(j)-f^\star_\p(j+1)\le \rchi({f^\star_\p},\p)$ by definition of $\rchi({f^\star_\p},\p)$ in \eqref{eq:chidef}. Further, $\rchi({f^\star_\p},\p)\le\rchi({f^\star_\bk},\bk)$ for any $\p\le \bk$ since the price of anarchy is a monotonically decreasing function (Proposition \ref{thm:optimalf}). Finally, Lemma \ref{lemma:worsechi} shows that for any $\p\le \bk$, it holds $\rchi({f^\star_\bk},\bk)\le\rchi({\fr_\p},\bk)$. Hence $ j\fr_\p(j)-\fr_\p(j+1)\le \rchi({\fr_\p},\bk)$ for $j\in[\p-1]$. It follows that $ \fr_\p$ is feasible for the original problem \eqref{eq:original}. 

Thanks to this, and to the fact that $\fr_\p$ is optimal for \eqref{eq:relaxed}, we conclude that $\fr_\p$ is a solution of the original problem.
\end{proof}

\section*{Proof of Lemma \ref{lemma:riskydoesnotoverpay}}
\begin{proof}
The result of Lemma \ref{lemma:generalinequality} implies that for all $\p\le\bk$
\[\p \fr_\p(\p)-j\fr_\p(j)\ge 0
\qquad \forall j\in[\p,\bk-1]\,.
\]
Further we know from \cite{Gairing09} that $\p f^\star_\p(\p)\le 1$. Since $ f^\star_\p(\p)= \fr_\p(\p)$, we conclude that for $j\in[\p,\bk-1]$
\[
1\ge\p f^\star_\p(\p)=\p\fr_\p(\p)\ge j\fr_\p(j)\,.
\]
For $j\in[\p]$, it holds $\fr_\p(j)=f^\star_\p(j)$ and we already know that the optimal distribution $f^\star_\p$ does not overpay the players \cite{Gairing09}. This concludes the proof.
\end{proof}

\section*{Proof of Theorem \ref{thm:gainsvslosses}}
\begin{proof}
i) Thanks to Proposition \ref{thm:optimalf}, the performance of $f^\star_\bk$ on the class of games with cardinality $k$ can be computed as $\poa(f^\star_\bk,k) = \frac{1}{1+\rchi(f^\star_\bk,k)}$. 
Since $k\le\bk$, we can apply part i) of Lemma \ref{lemma:redblueineq} to $\rchi(f^\star_\bk,k)$ and conclude that
\[
\poa(\fs,k) =\frac{1}{1+\rchi({f^\star_\bk},\bk)}\,.
\]
This means that the performance of $\fs$ on the set of games with cardinality $k$ is the same performance of the distribution $f^\star_\bk$ on the set of games with cardinality $\bk\ge k$, and
\[
\poa(\fs,k) =\poa({f^\star_\bk},\bk)\le\poa({f^\star_k},k)\,,
\]
where the last inequality holds since $\poa({f^\star_k},k)$ is a decreasing function of $k$ as seen in part iii) of Proposition \ref{thm:optimalf}. The inequality is tight if and only if $k=\bk$.

ii) Consider $\p\in[\bk-1]$. The performance of $\fr_\p$ on the class of games with cardinality $k$ can be computed as $\poa({\fr_\p},k) = \frac{1}{1+\rchi({\fr_\p},k)}$. Since $\p\in[k-1]$, we apply part ii) of Lemma \ref{lemma:redblueineq} to conclude that 
\[
\poa({\fr_\p},k)=\frac{1}{1+\rchi({\fr_\p},\bk)}\,.
\]
Hence, for $\p\in[\bk-1]$, the performance of $\fr_\p$ in the class of games with cardinality $k$ is the same of the performance in the class of games with cardinality $\bk$ i.e. $\poa({\fr_\p},k)=\poa({\fr_\p},\bk)$.
Finally, by Lemma \ref{lemma:worsechi} we conclude that such performance is worse than what $\fs$ can offer
\[
\poa({\fr_\p},k)<\frac{1}{1+\rchi({f_{\bk}^\star},\bk)}=\poa({\fs},k)\,.
\]

 Consider $\p\in[k,\bk]$. Since $k\le \p$, only the first $k$ entries of $\fr_\p$ will determine the performance and these are identical  to $f^\star_\p$ by definition of $\fr_\p$. Hence 
$\poa({\fr_\p},k)=\frac{1}{1+\rchi({f^\star_\p},k)}$.
Further $\p\in[k,\bk]$ and part i) of Lemma \ref{lemma:redblueineq} applies
\[
\poa({\fr_\p},k)=\frac{1}{1+\rchi({f^\star_\p},\p)}=\poa({f^\star_\p},\p)\,,
\]
so that $\fr_\p$ has the same performance of $f^\star_\p$.
Using the fact that the optimal price of anarchy is a decreasing function,  for any $p\in[k,\bk]$ we get
\[
\poa({\fr_\p},k)=\poa({f^\star_\p},\p)\ge
\poa({f^\star_\bk},\bk)=\poa({\fs},k)\,.
\]
The inequality is tight if and only if $p=\bk$.
\end{proof}
\section*{Proof of Theorem \ref{thm:alg}}
\begin{proof}
i) Consider $x_r(t)$ for fixed $\ri\in\mR$. The integer sequence $\{x_r(t)\}_{t=0}^\infty$ is upper bounded by the true cardinality $k$ (by definition of cardinality) and is non decreasing in $t$ thanks to its update rule (line 6 in Algorithm \ref{alg:learning}). Hence, after a finite number of steps, $x_r(t)$ has converged to $x_\ri^\infty$. Repeating the same reasoning for all the resources $r\in\mR$ shows that the map $x_\ri(t)$ converges in a finite number $\hat t$ of steps. Hence, for $t\ge \hat t$ the distribution rule used in the algorithm is fixed. Consequently the game is potential as it can be formulated as a standard congestion game \cite{Gairing09, MondererShapley96}.
Since for $t\ge \hat t$ agents are playing round-robin best response on a potential game, their strategy will converge in a finite number of steps to a Nash equilibrium of the game with resource specific distribution rules fixed to $\falgkinfi$ for $\ri\in\mc{R}$.

ii) Let us define $\ke=\max_{r\in\mc{R}}|\ae|_r$ (note that in general $\ke\neq\km$). To ease the notation, in the following we will simply use $f(x_r,|a|_r)$ to indicate $\falgkinfi(|a|_r)$, and similarly $u^i(\a{i},\a{-i})$ to refer to $u^i(\a{i},\a{-i};\{x_\ri^\infty\}_{r\in\mc{R}})$. Further, we define $A_e=\cup_{i}a_e^i$ and $A_o=\cup_{i}\aopt^i$

By definition of equilibrium we have for all $i\in[n]$, $u^i(\a{i}_e,\a{-i}_e)\ge u^i(\aopt^i,\a{-i}_e)$ and hence
\be
0\le \sum_{i\in[n]}u^i(\a{i}_e,\a{-i}_e)-\sum_{i\in [n]}u^i(\aopt^i,\a{-i}_e)\,.
\label{eq:eqvsopt}
\ee
Using the definition of payoff, the first term can be rewritten as
\be\begin{split}
\label{eq:firstterm}
&\sum_{i\in[n]}u^i(\a{i}_e,\a{-i}_e) = \sum_{i\in[n]}\sum_{r\in\a{i}_e} f(x_\ri,|\ae|_r)v_r
\\
=& \sum_{r\in A_e} |a_e|_r  f(x_\ri,|\ae|_r) v_r= \sum_{j=1}^{\ke}\sum_{
\substack{r\in A_e\\ |\ae|_r=j}} j f(x_\ri,j)v_r\,.
\end{split}
\ee
With a similar manipulation the second term becomes 
\[
\begin{split}
\sum_{i\in[n]}u^i(\aopt^i,\a{-i}_e) &= \sum_{i\in[n]}\sum_{r\in\aopt^i} f(x_\ri,|(\aopt^i,\a{-i}_e)|_r)v_r
\\
&\ge \sum_{i\in[n]}\sum_{r\in\aopt^i} f(x_\ri,\min\{k,|a_e|_r+1\})v_r,
\end{split}
\]
this holds because for all resources $|(\aopt^i,\a{-i}_e)|_r\le |a_e|_r+1$, $|(\aopt^i, \a{-i}_e)|_r\le k$ and $f$ is non increasing in its second argument.
For resources $r\in \aopt$ it holds $ |\aopt|_r\ge1$, and so
\[
\begin{split}
 &\sum_{i\in[n]}\sum_{r\in\aopt^i} f(x_\ri,\min\{k,|\ae|_r+1\})v_r 
 \\
  =& \sum_{r\in A_o} |\aopt|_r f(x_\ri,\min\{k,|\ae|_r+1\})v_r
  \\
  \ge& \sum_{r\in A_o} f(x_\ri,\min\{k,|\ae|_r+1\})v_r.
 \end{split}
\]
The second term in \eqref{eq:eqvsopt} can thus be lower bounded by 
\be
\label{eq:secondterm}
\begin{split}
\sum_{i\in[n]}u^i(\aopt^i,\a{-i}_e) \ge& \sum_{r\in A_o} f(x_\ri,\min\{k,|a_e|_r+1\})v_r \\
=
& \sum_{j=0}^{\ke}\sum_{
\substack{r\in A_o\\ |\ae|_r=j}} f(x_\ri,\min\{k,j+1\})v_r\,.
\end{split}
\ee
Substituting \eqref{eq:firstterm} and \eqref{eq:secondterm} in \eqref{eq:eqvsopt} gives
\allowdisplaybreaks[3]
{\small
\begin{align}
0\le& \sum_{i\in[n]}u^i(\a{i}_e,\a{-i}_e)-\sum_{i\in [n]}u^i(\aopt^i,\a{-i}_e) \nonumber \\
\le&
\sum_{j=1}^{\ke}\sum_{
\substack{r\in A_e \\ |\ae|_r=j}} j f(x_\ri,j)v_r  -
\sum_{j=0}^{\ke}\sum_{
\substack{r\in A_o\\ |\ae|_r=j}} f(x_\ri,\min\{k,j+1\})v_r \nonumber \\
=&
\sum_{j=1}^{\ke}\sum_{
\substack{r\in A_e\\ |\ae|_r=j}} j  f(x_\ri,j)v_r  -
\sum_{j=1}^{\ke}\sum_{
\substack{r\in A_o\\ |\ae|_r=j}} f(x_\ri,\min\{k,j+1\})v_r \nonumber \\
&-\sum_{r\in A_o\setminus A_e }v_r f(x_\ri,1)\nonumber \\
=&
\sum_{j=1}^{\ke}\sum_{
\substack{r\in A_e\setminus A_o\\ |\ae|_r=j}} j  f(x_\ri,j)v_r +
\sum_{j=1}^{\ke}\sum_{
\substack{r\in A_e\cap A_o\\ |\ae|_r=j}} j  f(x_\ri,j)v_r \nonumber
\\
&-
\sum_{j=1}^{\ke}\sum_{
\substack{r\in A_o\\ |\ae|_r=j}} f(x_\ri,\min\{k,j+1\})v_r
-\sum_{r\in A_o\setminus A_e }v_r f(x_\ri,1)\nonumber \\
=&
\sum_{j=1}^{\ke}\sum_{
\substack{r\in A_e\setminus A_o\\ |\ae|_r=j}} j  f(x_\ri,j)v_r 
-\sum_{r\in A_o\setminus A_e }v_r \nonumber \\
&+
\sum_{j=1}^{\ke} \sum_{
\substack{r\in A_o\\ |\ae|_r=j}}\biggl( j  f(x_\ri,j)-  f(x_\ri,\min\{k,j+1\})\biggr)v_r
\,,
\label{eq:mainineq}
\end{align}
}
where we have used the fact that $f(x_\ri,1)=1$ for all resources.
We intend to bound the first and the third term in the last expression. In the summands of \eqref{eq:mainineq} $j=|\ae|_r\le x^\infty_r$ due to the update of $x_\ri(t)$ in Algorithm \ref{alg:learning} and recall that $f(x_\ri,|a|_r)=\falgkinfi(|a|_r)$. Hence we can apply Lemma  \ref{lemma:keypoints} to the first term in \eqref{eq:mainineq} 
\be\begin{split}
\sum_{j=1}^{\ke}\sum_{
\substack{r\in A_e\setminus A_o\\ |\ae|_r=j}} j f(x_\ri,j)v_r
\le &
\sum_{j=1}^{\ke}\sum_{\substack{r\in A_e\setminus A_o\\ |\ae|_r=j}} (\rchi({f^\star_{\km},\km)}+1) v_r\,.
\end{split}
\label{eq:firstb}
\ee
Similarly for the third term in \eqref{eq:mainineq}
\be
\label{eq:secondb}
\begin{split}
&\sum_{j=1}^{\ke} \sum_{
\substack{r\in A_o\\ |\ae|_r=j}}\biggl( j f(x_\ri,j)-  f(x_\ri,\min\{k,j+1\})\biggr)v_r
\\
&\le
\sum_{j=1}^{\ke}\sum_{\substack{r\in A_o\\ |\ae|_r=j}} \rchi({f^\star_{\km},\km)} v_r=\rchi({f^\star_{\km},\km)}  \hspace*{-2mm}\sum_{r\in \ae\cap \aopt} \hspace*{-2mm} v_r\,.
\end{split}
\ee
Hence combining \eqref{eq:mainineq} with the bounds from \eqref{eq:firstb} and \eqref{eq:secondb}
\begin{align}
0\le &(\rchi(f^\star_{\km},\km)+1) \sum_{j=1}^{\ke}\sum_{
\substack{r\in A_e\setminus A_o\\ |\ae|_r=j}}v_r
\\
&+\rchi(f^\star_{\km},\km)  \sum_{r\in A_e\cap A_o} v_r 
-
\sum_{r\in A_o\setminus A_e }v_r \nonumber\\
=&(\rchi(f^\star_{\km},\km)+1)\sum_{r\in A_e\setminus A_o}v_r 
\\
&+
\rchi(f^\star_{\km},\km)  \sum_{r\in A_e\cap A_o} v_r -\sum_{r\in A_o\setminus A_e }v_r \nonumber\\
=&(\rchi(f^\star_{\km},\km)+1)\sum_{r\in A_e\setminus A_o}v_r 
\\
&+
(\rchi(f^\star_{\km},\km)+1)  \sum_{r\in A_e\cap A_o}  v_r 
-
\sum_{r\in A_o}v_r \nonumber\\
=&(\rchi(f^\star_{\km},\km)+1)\sum_{r\in A_e} v_r
-
\sum_{r\in A_o}v_r \nonumber\\ 
=&(\rchi(f^\star_{\km},\km)+1)W(\ae)-W(\aopt).\nonumber
\end{align}
Hence $(\rchi(f^\star_{\km},\km)+1)W(\ae)-W(\aopt)\ge 0$ and rearranging
\[
\frac{W(\ae)}{W(\aopt)}\ge \frac{1}{1+\rchi(f^\star_{\km},\km)}=\poa(f^\star_{\km},{\km})\ge\poa({f^\star_{k}},{k})\,,
\]
where the last inequality follows from the fact that the price of anarchy is a decreasing function, and $\km\le k$ by definition of cardinality.
\end{proof}

\section*{Proof of Theorem \ref{thm:counterex}}
\begin{proof}
i)
Consider the covering problem depicted in the following figure (a), composed of players $p_1$, $p_2$, $p_3$ represented by a solid dot; resources $r_1$, $r_2$, $r_3$, $r_4$ represented by a circle with values $v_1$, $v_2$, $v_3$, $v_4$ such that 
\[v_1>v_3>v_4>v_2 \quad \text{and}\quad v_1f^\star_3(2)<v_2<v_1f^\star_2(2)<v_4\] 
As an example take $v=(11,5,7,6)$. 
Each player $p_1$, $p_2$, $p_3$ can choose only one resource from $\{r_1,r_2,r_3\}$, $\{r_2,r_3,r_4\}$, $\{r_1,r_2,r_3,r_4\}$, respectively i.e. each player can only choose one arrow pointing outwards from himself.

\begin{figure}[h!]
          \begin{subfigure}[b]{.5\linewidth}
            \centering \includegraphics[scale = 0.65]{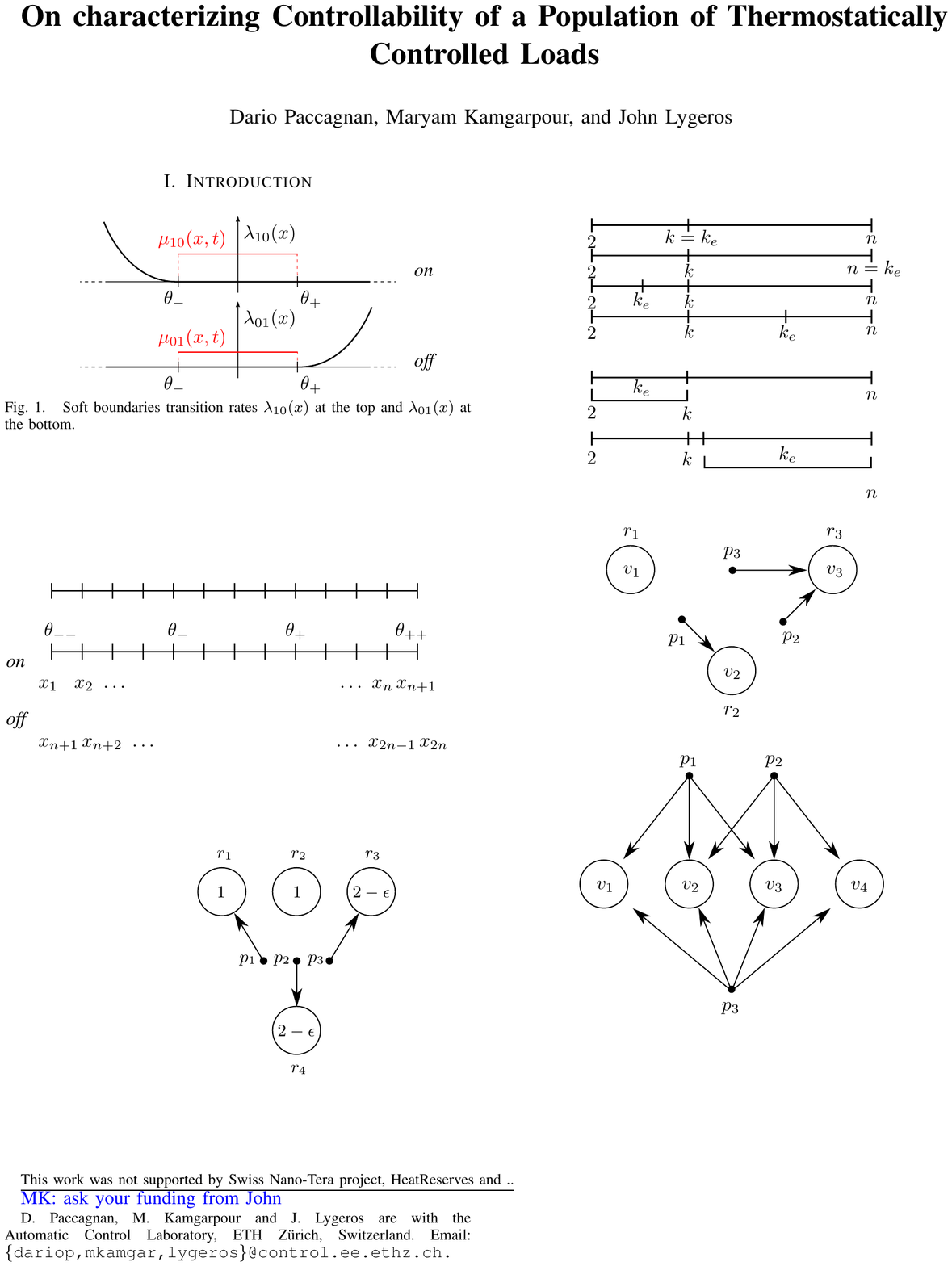}
            \caption{Original game}\label{fig:originalmine}
          \end{subfigure}%
          \begin{subfigure}[b]{.5\linewidth}
            \centering \includegraphics[scale = 0.65]{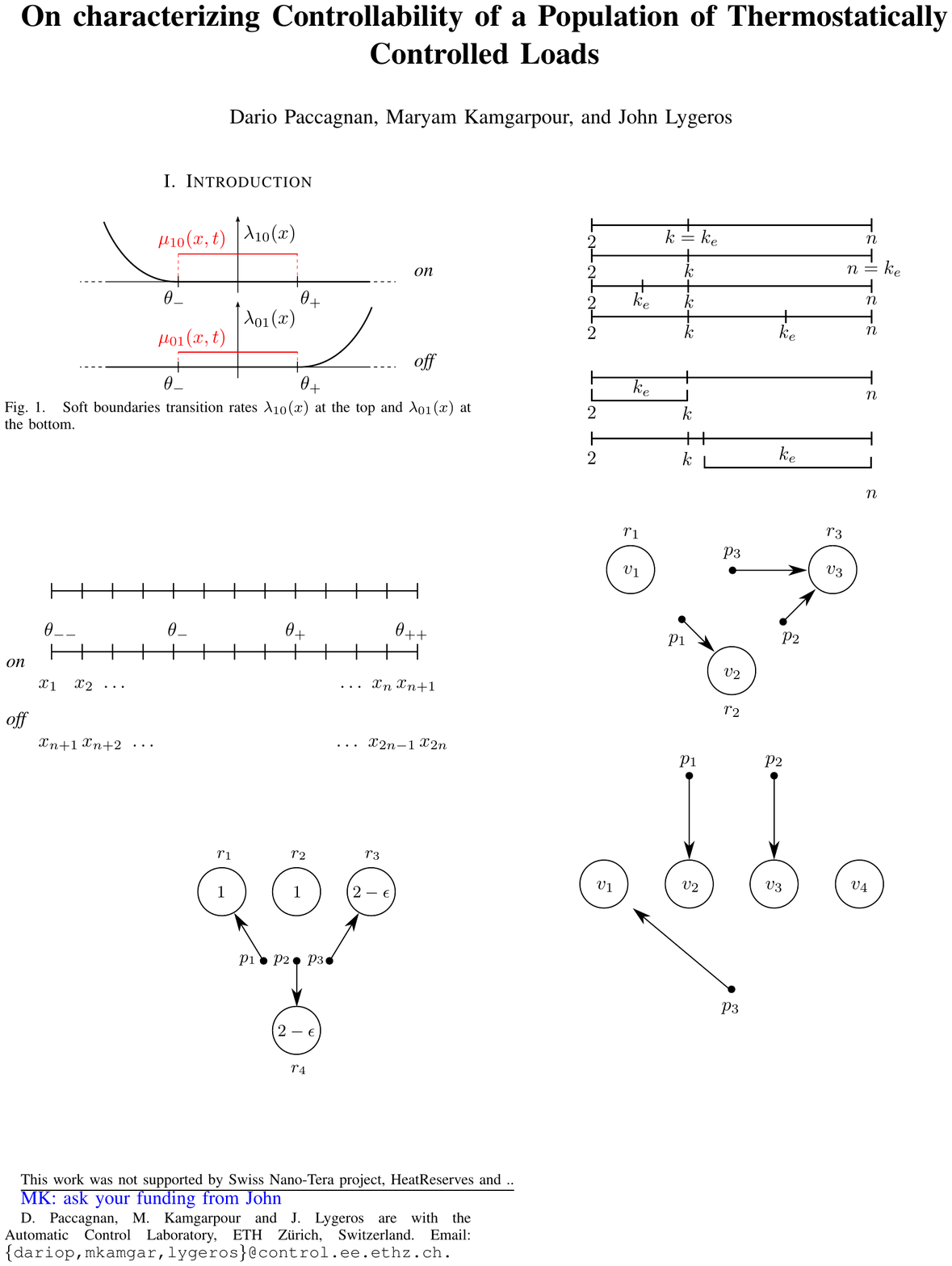}
            \caption{Equilibrium $\ae$}\label{fig:equilmine}
          \end{subfigure}
\end{figure}

The cardinality is $k=3$ since all players could choose simultaneously $r_2$ or $r_3$, hence the optimal distribution rule is $f^\star_3$. Amongst the equilibria obtained with $f^\star_3$ there is $a_e=(r_2,r_3,r_1)$, depicted in the previous figure (b). This configuration is an equilibrium since 
$v_2> v_1 f^\star_3(2) > v_3 f^\star_3(2)$ and $v_3> v_4$, $v_3> v_2 f^\star_3(2)$ and $v_1> v_2 f^\star_3(2)$, $v_1> v_3 f^\star_3(2)$, $v_1> v_4$. Such equilibrium gives a welfare of $v_1+v_2+v_3$ that is less than the optimal $v_1+v_3+v_4$, since $v_2<v_4$. We intend to show that for any initial condition and for any execution, Algorithm \ref{alg:learning} will converge to an optimal allocation. This suffices to prove that the worst equilibrium obtained with Algorithm \ref{alg:learning} performs better than the worst equilibrium obtained with $f^\star_3$, which is not optimal as shown above.
Observe that the conditions $v_1>v_3>v_4>v_2$ and $v_4>v_1f^\star_2(2)$ ensure that an allocation with two or more agents covering the same resource is never an equilibrium. This holds regardless of the distribution used.
Hence, the welfare can potentially take 
$
\begin{pmatrix}
	4\\ 3 
\end{pmatrix}
=
4
$ different values, since the binomial represents the number of subsets with $3$ elements (agents allocations) that can be extracted from a set of $4$ elements (set of resources).
These different welfare values are obtained for  $(r_1,r_2,r_4)$, $(r_2,r_3,r_4)$, $(r_1,r_3,r_4)$, $(r_1,r_2,r_3)$, or feasible permutations of each.
The allocation $(r_1,r_2,r_4)$ is never an equilibrium since player $p_3$ can improve moving to $r_3$ because  $v_3>v_4$. Similarly for any feasible permutation of $(r_1,r_2,r_4)$, the player selecting resource $r_4$ can always improve moving to $r_3$. 
The allocation $(r_2,r_3,r_4)$ is never an equilibrium since player
$p_3$ can improve moving to $r_1$ since $v_1$ is the highest. Similarly for any feasible permutation of $(r_2,r_3,r_4)$, there exists a player that can improve moving to $r_1$. This holds regardless of what distribution rule is used.
The allocation $(r_1,r_3,r_4)$ (or any feasible permutation) is optimal.
We are thus left to show that Algorithm \ref{alg:learning} never converges to $(r_1,r_2,r_3)$, or any other feasible permutation. We show this by enumeration.

The allocation $(r_1,r_2,r_3)$ can not be an equilibrium since player $p_2$ can improve moving to $r_4$ because $v_4>v_2$. 
The allocation $(r_1,r_3,r_2)$ can not be an equilibrium since player $p_3$ can improve moving to $r_4$. 
The allocation $(r_3,r_2,r_1)$ can not be an equilibrium since player $p_2$ can improve moving to $r_4$. 
We are left to check $a_e=(r_2,r_3,r_1)$, depicted in the previous figure (b). This can not be an equilibrium of Algorithm \ref{alg:learning}, because $v_2 < v_1 \falg(2)$ for $l=1$, $2$ and so player $p_2$ could improve moving to $r_1$. The fact that the algorithm uses $l\le 2$ on resource $r_1$ holds because the maximum number of players on $r_1$ is two, and so $k_t(1)\le 2$ at any time step $t\in [n]$. We conclude that all the equilibria towards which the algorithm converges give optimal welfare, while
$f^\star_3$ also produces the suboptimal equilibrium $a_e$; the claim follows. Observe that this is not a worst case instance because the price of anarchy with the example values $v=(11,5,7,6)$ is 
\[
\frac{W(\ae)}{W(\aopt)}=\frac{11+5+7}{11+6+7}=\frac{23}{24} >\poa({f^\star_3},3)=\frac{7}{11}\,.
\]

ii) Consider the covering problem depicted in the following figure (a), composed of players $p_1$, $p_2$, $p_3$ represented by a solid dot; resources $r_1$, $r_2$, $r_3$ represented by an empty circle with values $v_1$, $v_2$, $v_3$ such that \[v_3f^\star_3(2)<v_1<v_2<v_3/2<v_3.\] As an example take $v= [9, 9.5,20]$. 
Each player $p_1$, $p_2$, $p_3$ can choose only one resource from $\{r_1,r_2\}$, $\{r_2,r_3\}$, $\{r_1,r_2,r_3\}$, respectively i.e. each player can only choose one arrow pointing outwards from himself.

\begin{figure}[h!]
          \begin{subfigure}[b]{.5\linewidth}
            \centering \includegraphics[scale = 0.19]{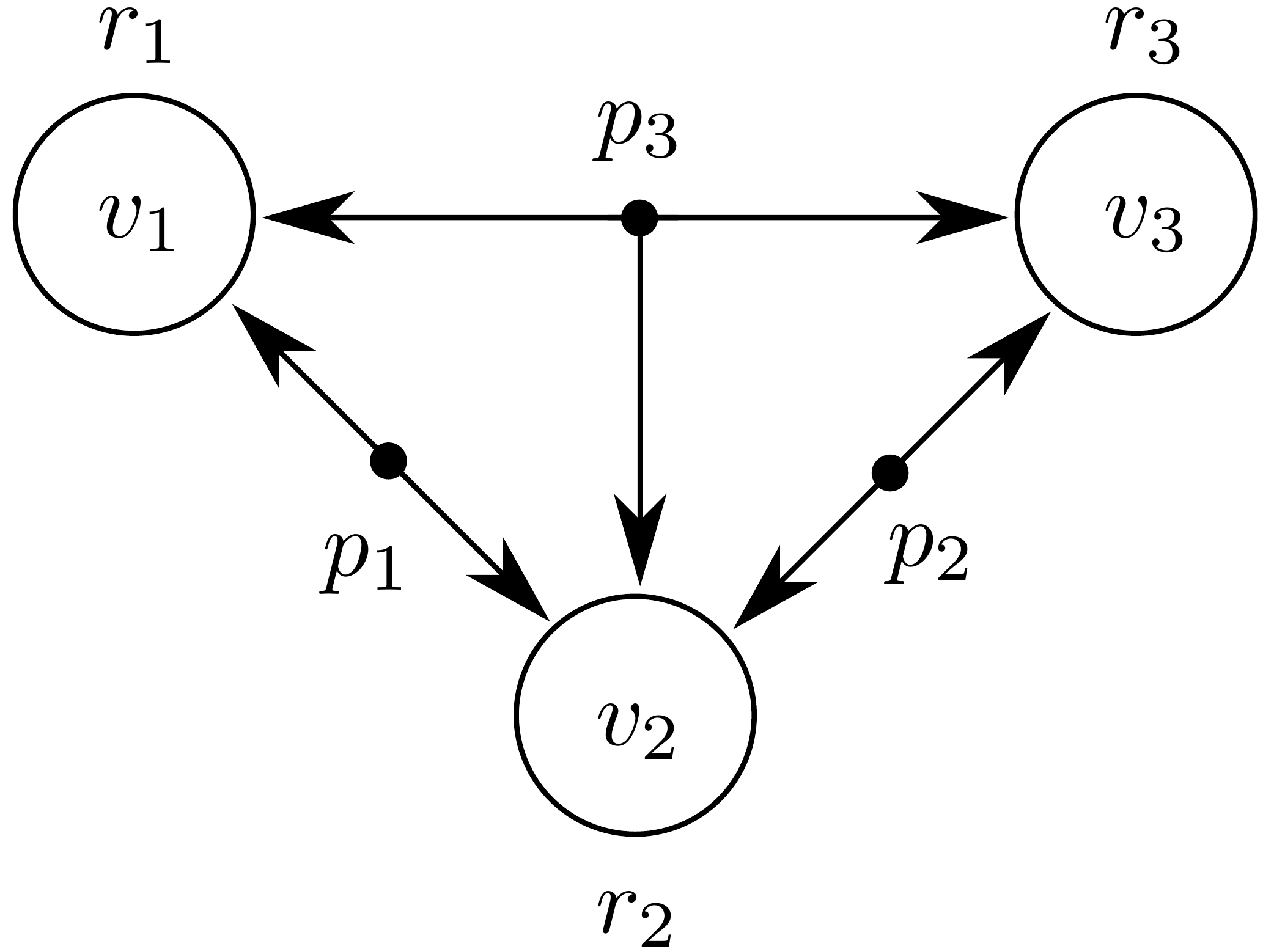}
            \caption{Original game}\label{fig:original}
          \end{subfigure}%
          \begin{subfigure}[b]{.5\linewidth}
            \centering \includegraphics[scale = 0.19]{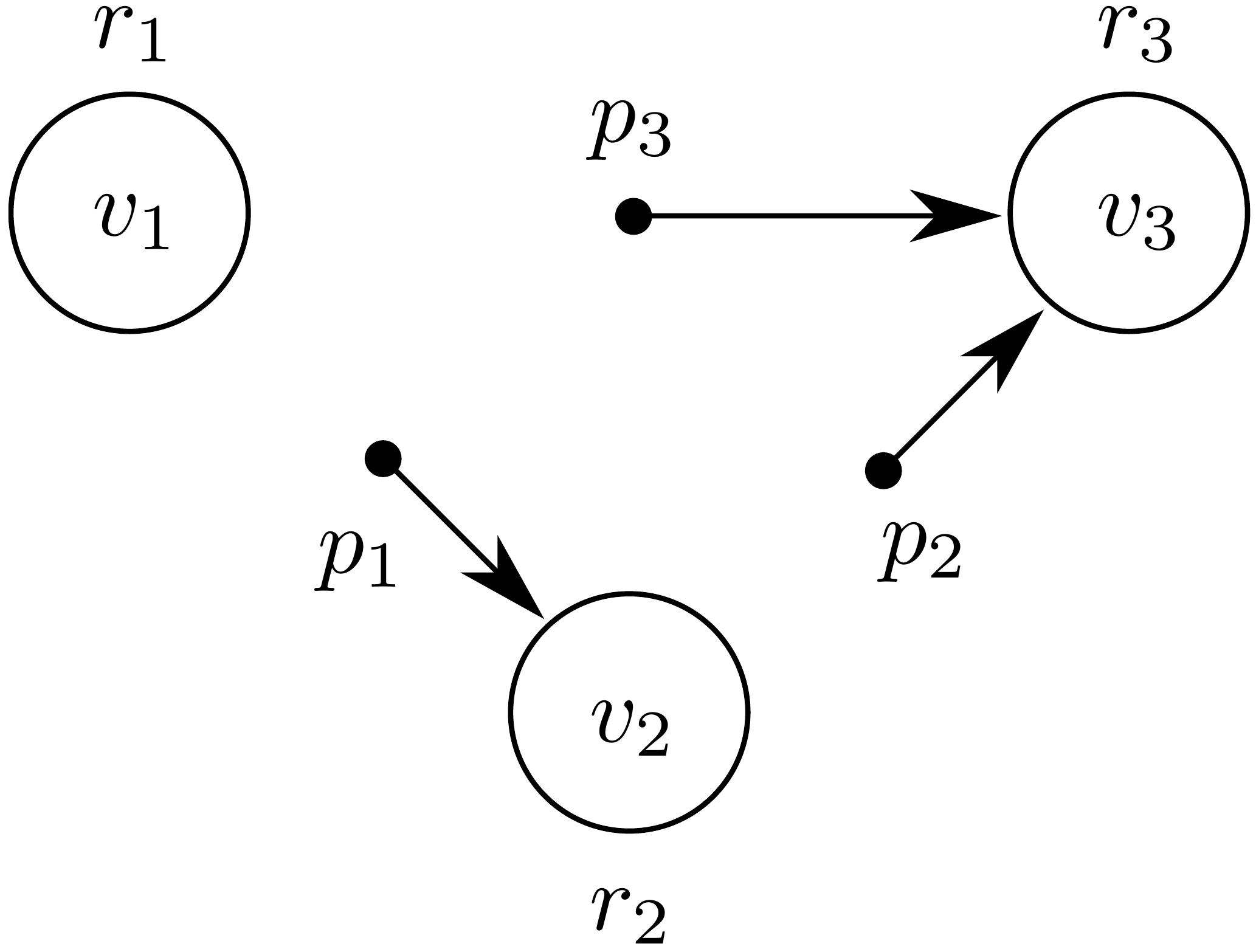}
            \caption{Equilibrium $a_2$}\label{fig:equil}
          \end{subfigure}
\end{figure}

The cardinality is $k=3$ since all players could choose simultaneously $r_1$, hence the optimal distribution rule is $f^\star_3$. All the equilibria obtained with $f^\star_3$ are completely spread i.e. they feature one and only one player on each resource. Any allocation where there are two or more players in one resource is not an equilibrium for $f^\star_3$, as detailed in the following.\\
 If all three players selected resource $r_2$, $p_2$ could improve moving to $r_3$ since $v_3>v_2$. If $p_1$ and $p_3$ selected $r_1$, depending on the choice of $p_2$, either $p_1$ or $p_3$ could improve moving respectively to $r_2$ or $r_3$ since $v_2>v_1$ and $v_3>v_1$. If $p_2$ and $p_3$ selected $r_3$, depending on the choice of $p_1$, either $p_2$ or $p_3$ could improve moving respectively to $r_2$ or $r_1$ since $v_3 f^\star_3(2) <v_2$ and $v_3 f^\star_3(2)  <v_1$. If $p_1$, $p_3$ selected both $r_2$, regardless of the choice of $p_2$, $p_3$ could improve moving to $r_3$ since $v_3>v_2$.  If $p_2$, $p_3$ selected both $r_2$, regardless of the choice of $p_1$, $p_3$ could improve moving to $r_3$ since $v_3>v_2$. Finally, if $p_1$, $p_2$ selected $r_2$, regardless of the choice of $p_3$, $p_2$ could always improve moving to $r_3$ since $v_3>v_2$.
 Thus all equilibria obtained with $f^\star_3$ (including the worst) give a welfare of $v_1+v_2+v_3$.

Let us consider Algorithm \ref{alg:learning} and initialise it at $a_1=[r_2,r_3,r_1]$, giving $k_1(r)=1$ for all $r$. Player $p_3$ updates and since $v_3\cdot 1>v_1\cdot 1$, he selects $r_3$, giving $a_2=[r_2,r_3,r_3]$ and $k_2(r)=1$ for $r_1$, $r_2$ and $k_2(r_3)=2$. This allocation is depicted in the previous figure (b) and is an equilibrium configuration. Indeed $p_1$ can not improve since $v_2>v_1$; $p_2$ can not improve since $v_3 f^\star_2(2)= \frac{v_3}{2}>v_2 \cdot 1 $; $p_3$ can not improve since $v_3 f^\star_2(2)= \frac{v_3}{2}>v_2 \cdot 1 $ and $v_3 f^\star_2(2)= \frac{v_3}{2}>v_1 \cdot 1 $. Such equilibrium has a welfare of $v_2+v_3$.

In conclusion, all equilibria obtained with $f^\star_3$ give a better welfare than $a_2$ and thus of the worst equilibrium obtained with Algorithm \ref{alg:learning}.

\end{proof}

\section{Lemmata}
\label{app:lemmata}
\begin{lemma}
\label{lemma:generalinequality}
Let $\p\in[\bk]$. The distribution $\fr_\p$ satisfies 
\[j  \fr_\p(j)-(j+1)\fr_\p(j+1)\ge 0
\qquad j \in[\p,\bk-1]\,.
\]
\begin{proof}
Recall that $\fr_\p$ is obtained from equation \eqref{eq:friskyformula}. Using $\rchi({\fr_\p},\bk)$ from \eqref{eq:solutionchi}, one can reconstruct the tail entries of $\fr_\p(j)$ with the following backward recursion
\[
\begin{split}
&j \fr_\p(j)-\fr_\p(j+1)=\rchi({\fr_\p},\bk)\quad j \in [\p,\bk-1]\,,\\
&(\bk-1)\fr_\p(\bk)=\rchi({\fr_\p},\bk)\,.
\end{split}
\]
Starting from $\fr_\p(\bk)=\frac{\rchi({\fr_\p},\bk)}{\bk-1}$, the first equation gives for $j\ge \p$
\[
j  \fr_\p(j) =\rchi({\fr_\p},\bk)\biggl(1+\sum_{i=j+1}^{\bk-1} \frac{j!}{i!}+\frac{j!}{(\bk-1)(\bk-1)!}\biggl)\,.
\]
Hence
\[\small \begin{split}
&\frac{1}{\rchi({\fr_\p},\bk)}(j  \fr_\p(j) -(j+1)\fr_\p(j+1)) =\\
&=\sum_{i=j+1}^{\bk-1} \frac{j!}{i!}-\sum_{i=j+2}^{\bk-1} \frac{(j+1)!}{i!}+\frac{j!-(j+1)!}{(\bk-1)(\bk-1)!}=\\
&=\sum_{i=j+1}^{\bk-2} \biggl( \frac{j!}{i!}-\frac{(j+1)!}{(i+1)!}\biggr)+\frac{j!}{(\bk-1)!}\biggl(1+\frac{1}{\bk-1}-\frac{j+1}{\bk-1}\biggr)\,.
\end{split}
\]
Note that for $i>j$, one has 
\[\frac{j!}{i!}=\frac{1}{i(i-1)\dots(j+1)}\text{ ~ and so ~} \frac{j!}{i!}-\frac{(j+1)!}{(i+1)!}>0.\]
Further 
\[
1+\frac{1}{\bk-1}+\frac{j+1}{\bk-1} = \frac{\bk -j-1}{\bk-1}\ge0
\]
since $j\le \bk-1$ by assumption.
Hence we conclude that
\[j  \fr_\p(j)-(j+1)\fr_\p(j+1)\ge 0
\qquad j \in[\p,\bk-1]\,.
\]
\end{proof}
\end{lemma}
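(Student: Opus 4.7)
The plan is to exploit the recursive structure of $\fr_\p$ on its tail entries, which was identified inside the proof of Proposition \ref{prop:riskyf}. Specifically, from the KKT stationarity analysis, the optimiser satisfies the equalities
\[
j\fr_\p(j)-\fr_\p(j+1)=\rchi(\fr_\p,\bk)\quad j\in[\p,\bk-1],\qquad (\bk-1)\fr_\p(\bk)=\rchi(\fr_\p,\bk),
\]
so the tail is completely determined by a first-order linear backward recursion starting at $\fr_\p(\bk)=\rchi(\fr_\p,\bk)/(\bk-1)$. My first step would be to unroll this recursion and obtain a closed form for the quantity $j\fr_\p(j)$, which I expect to look like
\[
j\fr_\p(j)=\rchi(\fr_\p,\bk)\Bigl(1+\sum_{i=j+1}^{\bk-1}\tfrac{j!}{i!}+\tfrac{j!}{(\bk-1)(\bk-1)!}\Bigr),
\]
the two trailing pieces reflecting the ordinary tail indices and the boundary contribution from $(\bk-1)\fr_\p(\bk)=\rchi(\fr_\p,\bk)$.

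Next, I would plug this closed form into $j\fr_\p(j)-(j+1)\fr_\p(j+1)$ and collect terms. After cancelling the leading $\rchi(\fr_\p,\bk)$ and shifting indices in the second sum, the difference rearranges into
\[
\sum_{i=j+1}^{\bk-2}\Bigl(\tfrac{j!}{i!}-\tfrac{(j+1)!}{(i+1)!}\Bigr)+\tfrac{j!}{(\bk-1)!}\Bigl(1+\tfrac{1}{\bk-1}-\tfrac{j+1}{\bk-1}\Bigr),
\]
up to the overall positive factor $\rchi(\fr_\p,\bk)\ge 0$. I would then argue termwise positivity: each summand satisfies $\frac{j!}{i!}-\frac{(j+1)!}{(i+1)!}=\frac{1}{(j+1)\cdots i}-\frac{1}{(j+2)\cdots(i+1)}\ge 0$ because for $i\ge j+1$ every factor in the second denominator is at least the corresponding factor in the first denominator, so the second fraction is smaller. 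For the boundary piece, the bracket simplifies to $(\bk-j-1)/(\bk-1)$, which is nonnegative exactly when $j\le \bk-1$, matching the range claimed in the lemma.

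The main obstacle I anticipate is purely bookkeeping: correctly obtaining the closed form from the backward recursion, keeping track of the boundary term coming from $(\bk-1)\fr_\p(\bk)=\rchi(\fr_\p,\bk)$ as opposed to the generic $j\fr_\p(j)-\fr_\p(j+1)=\rchi(\fr_\p,\bk)$ relation, and making sure the index shift aligns so that the telescoping-style pairing is valid. Once the algebra is correctly assembled, nonnegativity is immediate for each piece, and combining with $\rchi(\fr_\p,\bk)\ge 0$ (which follows from the explicit formula \eqref{eq:solutionchi}) yields the lemma.
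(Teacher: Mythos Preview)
Your proposal is correct and follows essentially the same approach as the paper: unroll the tail recursion to obtain the closed form for $j\fr_\p(j)$, take the difference, and verify termwise nonnegativity of the summands $\tfrac{j!}{i!}-\tfrac{(j+1)!}{(i+1)!}$ together with the boundary bracket $(\bk-j-1)/(\bk-1)$. The only minor addition in your write-up is the explicit mention that $\rchi(\fr_\p,\bk)\ge 0$, which the paper leaves implicit.
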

\begin{lemma}
\label{lemma:worsechi}
For any $1<\p<\bk$ it holds $\rchi({f^\star_\bk},\bk)<\rchi({\fr_\p},\bk)$. 
\end{lemma}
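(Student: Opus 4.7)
The strategy is to leverage the nested-optimization interpretation: $f^\star_\bk$ minimizes $\chi(\cdot,\bk)$ over all of $\mathcal{F}$, while $\fr_\p$ minimizes the same objective over the strictly smaller subset of $\mathcal{F}$ where additionally $f(j)=f^\star_\p(j)$ for $j\in[\p]$. This inclusion immediately gives the weak inequality $\rchi(f^\star_\bk,\bk)\le\rchi(\fr_\p,\bk)$, so the whole point is to argue strictness.

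My plan is to assume equality $\chi^\star:=\rchi(f^\star_\bk,\bk)=\rchi(\fr_\p,\bk)$ and derive a contradiction with the strict monotonicity of $\poa(f^\star_k,k)$ in $k$ (part iii of Proposition~\ref{thm:optimalf}, easily seen to be strict from the explicit formula \eqref{eq:optimalpoa}). Substituting \eqref{eq:optimalf} into \eqref{eq:chidef} shows by direct computation that every bracketed term defining $\chi(f^\star_\bk,\bk)$ is simultaneously attained, i.e.\ $jf^\star_\bk(j)-f^\star_\bk(j+1)=\chi^\star$ for every $j\in[\bk-1]$ and $(\bk-1)f^\star_\bk(\bk)=\chi^\star$. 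The KKT analysis already carried out in the proof of Proposition~\ref{prop:riskyf} (see equation \eqref{eq:friskyformula}) yields the analogous identities for $\fr_\p$ on the tail $j\in[\p,\bk-1]$, together with $(\bk-1)\fr_\p(\bk)=\chi^\star$.

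From here I would deploy a double recurrence argument. The backward recursion $f(j)=(f(j+1)+\chi^\star)/j$ anchored at $f(\bk)=\chi^\star/(\bk-1)$ has a unique solution on $[\p,\bk]$ depending only on $\chi^\star$; applying it to both $f^\star_\bk$ and $\fr_\p$ forces them to coincide on that interval. In particular, $f^\star_\bk(\p)=\fr_\p(\p)=f^\star_\p(\p)$. Iterating the forward recursion $f(j+1)=jf(j)-\chi$ with anchor $f(1)=1$ instead expresses $f(\p)$ as an affine function of the parameter $\chi$, namely $f(\p)=\alpha_\p-\beta_\p\chi$ with $\beta_\p>0$ (shown by a two-line induction: $\beta_1=0$, $\beta_{j+1}=j\beta_j+1$). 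Running this recursion with $\chi=\chi^\star$ returns $f^\star_\bk(\p)$, and with $\chi=\rchi(f^\star_\p,\p)$ returns $f^\star_\p(\p)$; equality of the outputs therefore forces $\chi^\star=\rchi(f^\star_\p,\p)$.

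This last identity contradicts $\rchi(f^\star_\p,\p)<\rchi(f^\star_\bk,\bk)=\chi^\star$, which follows from the strict monotonicity of $\poa(f^\star_k,k)$ in $k$ together with $\rchi(f^\star_k,k)=1/\poa(f^\star_k,k)-1$, and we are done. The subtlety I anticipate lies in cleanly separating the forward and backward recurrences on $\fr_\p$ and keeping track of which constraints of problem~\eqref{eq:original} are tight on which interval; the injectivity $\beta_\p>0$ and the direct verification of the all-tight property of $f^\star_\bk$ are routine once the structure is isolated.
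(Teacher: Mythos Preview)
Your argument is correct and takes a genuinely different route from the paper's. The paper proceeds by direct computation: it writes out the closed-form expressions for $\rchi(f^\star_\bk,\bk)$ and $\rchi(\fr_\p,\bk)$, reduces the desired strict inequality to the purely combinatorial estimate
\[
\frac{1}{(\p-1)(\p-1)!}>\frac{\bk}{(\bk-1)(\bk-1)!}+\sum_{h=1}^{\bk-1-\p}\frac{1}{(\bk-h-1)!},
\]
and proves the latter by backward induction on $\p$ starting from $\p=\bk-1$. Your approach instead exploits the optimization structure: the nested feasible sets give the weak inequality for free, and strictness is extracted from the uniqueness of the backward recursion on $[\p,\bk]$ together with the injectivity of the forward recursion $f(1)=1$, $f(j+1)=jf(j)-\chi$ in the parameter $\chi$, ultimately reducing everything to the strict monotonicity of $\poa(f^\star_k,k)$ already recorded in Proposition~\ref{thm:optimalf}. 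What you gain is a cleaner, more conceptual proof that avoids the somewhat delicate algebraic induction; what the paper's approach buys is self-containment, since it does not rely on the all-tight characterizations of $f^\star_\bk$ and $\fr_\p$ established elsewhere (in~\eqref{eq:friskyformula} and in \cite{Gairing09}). One small point worth making explicit in a final write-up: your use of~\eqref{eq:friskyformula} comes from part~i) of the proof of Proposition~\ref{prop:riskyf}, which is established \emph{before} Lemma~\ref{lemma:worsechi} is invoked (only part~ii) of that proof needs the lemma), so there is no circularity.
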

\begin{proof}
The expression of $\rchi({f^\star_\bk},\bk)$ in \eqref{eq:optimalchi} and of $\rchi({\fr_\p},\bk)$ in equation \eqref{eq:solutionchi} can be rewritten as 
\[\begin{split}
\rchi({f^\star_\bk},\bk) &=
\frac{(\bk-1)(\bk-1)!}{1+(\bk-1)(\bk-1)!\sum_{i=1}^{\bk-1}\frac{1}{i!}}\,,\\
\rchi({\fr_\p},\bk) &= \frac{(\bk-1)(\bk-1)!}{\bk+\sum_{h=1}^{\bk-1-\p}\frac{(\bk-1)(\bk-1)!}{(\bk-h-1)!}}\beta(\p)\,,
\end{split}
\]
where 
\[
\beta(\p)\coloneqq
\frac{1}{1+\sum_{h=1}^{\p-1}\frac{(\p-1)(\p-1)!}{h!}}\,.
\]
Instead of showing $\rchi({f^\star_\bk},\bk)<\rchi({\fr_\p},\bk) $, in the following we equivalently prove that $\frac{1}{\rchi({f^\star_\bk},\bk)}>\frac{1}{\rchi({\fr_\p},\bk)}$ i.e., that
{\small{ \[\begin{split}
&{1+(\bk-1)(\bk-1)!\sum_{i=1}^{\bk-1}\frac{1}{i!}}=
{\bk+(\bk-1)(\bk-1)!\sum_{i=1}^{\bk-2}\frac{1}{i!}} >\\
&\biggl({\bk+\!\!\!\sum_{h=1}^{\bk-1-\p}\frac{(\bk-1)(\bk-1)!}{(\bk-h-1)!}}\biggr)\biggl({1+\!\sum_{h=1}^{\p-1}\frac{(\p-1)(\p-1)!}{h!}}\biggr).
\end{split}
\]}}
The previous inequality can be rewritten as
{\small {\[
\begin{split}
&
{(\bk-1)(\bk-1)!\biggl(\sum_{i=1}^{\bk-2}\frac{1}{i!}}-\sum_{h=1}^{\bk-1-\p}\frac{1}{(\bk-h-1)!} \biggr)>\\
&\biggl({\bk+\sum_{h=1}^{\bk-1-\p}\frac{(\bk-1)(\bk-1)!}{(\bk-h-1)!}}\biggr)\biggl({\sum_{h=1}^{\p-1}\frac{(\p-1)(\p-1)!}{h!}}\biggr)\,.
\end{split}
\]}}
Since the left hand side is equal to $(\bk-1)(\bk-1)!\sum_{h=1}^{\p-1}\frac{1}{h!}$, we can simplify the term $\sum_{h=1}^{\p-1}\frac{1}{h!}$ to get
\[
\small
\begin{split}
&
\frac{(\bk-1)(\bk-1)!}{(\p-1)(\p-1)!}>\biggl({\bk+\!\!\sum_{h=1}^{\bk-1-\p}\frac{(\bk-1)(\bk-1)!}{(\bk-h-1)!}}\biggr)\,,
\end{split}
\]
which is finally equivalent to 
\be
\label{eq:finalineq}
\frac{1}{(\p-1)(\p-1)!}>\frac{\bk}{(\bk-1)(\bk-1)}+\!\!\!\sum_{h=1}^{\bk-1-\p}\!\!\!\!\frac{1}{(\bk-h-1)!}.
\ee
We use \emph{induction} to show that inequality \eqref{eq:finalineq} holds for $1<\p<\bk$, as required. We start from $\p=\bk-1$ and apply induction backwards until we reach $\p=2$.

i) For $\p=\bk-1$ and $\p>1$ inequality \eqref{eq:finalineq} reads as 
\[
\frac{1}{(\p-1)(\p-1)!}>\frac{\p+1}{\p\cdot \p!}\iff \p^2>\p^2-1\,,
\]
which is always satisfied.

ii) Let us assume the inequality holds for a generic $\p\le\bk-1$, we show that it holds also for $\p-1$ (with $\p>1$). That is, we assume
\be
\label{eq:assind}
\frac{1}{(\p-1)(\p-1)!}>\frac{\bk}{(\bk-1)(\bk-1)}+\!\!\!\sum_{h=1}^{\bk-1-\p}\!\!\!\!\!\frac{1}{(\bk-h-1)!}\,,
\ee
and want to show 
\be
\label{eq:wanttoshow}
\frac{1}{(\p-2)(\p-2)!}>\frac{\bk}{(\bk-1)(\bk-1)}+\!\!\!\sum_{h=1}^{\bk-\p}\!\!\!\frac{1}{(\bk-h-1)!}\,.
\ee
We can rewrite the right hand side of \eqref{eq:wanttoshow} and use \eqref{eq:assind} to upper bound it
\be
\label{eq:chainineq}
\begin{split}
&\frac{\bk}{(\bk-1)(\bk-1)}+\sum_{h=1}^{\bk-\p}\frac{1}{(\bk-h-1)!}=\\
&
\frac{\bk}{(\bk-1)(\bk-1)}+\sum_{h=1}^{\bk-\p-1}\frac{1}{(\bk-h-1)!}
+\frac{1}{(\p-1)!}<\\
&\frac{1}{(\p-1)(\p-1)!}+\frac{1}{(\p-1)!}=
\frac{\p}{(\p-1)(\p-1)!}<\\
&\frac{1}{(\p-2)(\p-2)!}\,.
\end{split}
\ee
The last inequality holds since it is equivalent to 
\[
\frac{\p}{(\p-1)^2}<\frac{1}{\p-2}\iff\p^2-2\p<(\p-1)^2\,,
\]
which is always satisfied. Comparing the first and last term in \eqref{eq:chainineq} gives \eqref{eq:wanttoshow}.

This completes the induction and thus the proof.

\end{proof}
\begin{lemma}
\label{lemma:redblueineq}
\begin{itemize}
\item[i)] For any $1\le l \le m$,~$m\in\mb{N}$ it holds 
\[\rchi({f^\star_{m}},l)=\rchi({f^\star_{m}},m)\,.\]
\item[ii)] For any $k\in[\bk]$ and $1<\p<k$ it holds 
\[\rchi({\fr_\p},k)=\rchi({\fr_\p},\bk)\,.\]
\end{itemize}
\end{lemma}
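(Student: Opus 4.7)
The plan is to prove both parts by the same two-step scheme: first identify which terms in the max defining $\rchi(f,k)$ are tight so that the max collapses to a comparison between $\rchi(f,m)$ (resp.\ $\rchi(f,\bk)$) and the single boundary term $(l-1)f(l)$; then bound that boundary term by a short downward induction.

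For part i), I first plan to show that every constraint in the LP defining $f^\star_m$ is active, that is
\[
j f^\star_m(j) - f^\star_m(j+1) \;=\; (m-1) f^\star_m(m) \;=\; \rchi(f^\star_m,m) \qquad \forall j \in [m-1].
\]
This can be verified by a direct computation from the closed form \eqref{eq:optimalf}: both quantities evaluate to $1/B_m$, where $B_m := 1/((m-1)(m-1)!) + \sum_{i=1}^{m-1} 1/i!$ is the common denominator of \eqref{eq:optimalf}. Alternatively, one can read these equalities off the KKT argument used in the proof of Proposition \ref{prop:riskyf} specialised to $\p=1$. Substituting them into \eqref{eq:chidef} for $l \le m$ collapses the defining max to
\[
\rchi(f^\star_m,l) \;=\; \max\bigl\{\rchi(f^\star_m,m),\, (l-1) f^\star_m(l)\bigr\},
\]
and it remains to show $(l-1) f^\star_m(l) \le \rchi(f^\star_m,m)$ for $2 \le l \le m$. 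I plan to prove this by downward induction on $l$: the base $l=m$ is the tight equality above, and for the step I will use the recursion $l f^\star_m(l) = \rchi(f^\star_m,m) + f^\star_m(l+1)$ together with the inductive hypothesis $l f^\star_m(l+1) \le \rchi(f^\star_m,m)$ to conclude
\[
(l-1) f^\star_m(l) \;\le\; \frac{(l-1)(l+1)}{l^2}\,\rchi(f^\star_m,m) \;<\; \rchi(f^\star_m,m).
\]

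Part ii) then follows by the same template, with $\fr_\p$ replacing $f^\star_m$ and $\bk$ replacing $m$. The analogous tight equalities are furnished for free by the KKT system \eqref{eq:friskyformula} established inside the proof of Proposition \ref{prop:riskyf}, namely $j \fr_\p(j) - \fr_\p(j+1) = \rchi(\fr_\p,\bk)$ for $j \in [\p, \bk-1]$ and $(\bk-1) \fr_\p(\bk) = \rchi(\fr_\p,\bk)$. For the remaining indices $j \in [\p-1]$, the feasibility argument in that same proof (which in turn invokes Lemma \ref{lemma:worsechi}) already supplies the weaker one-sided bound $j \fr_\p(j) - \fr_\p(j+1) \le \rchi(\fr_\p,\bk)$. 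Assembling these, for any $k$ with $\p < k \le \bk$ the defining max reduces to
\[
\rchi(\fr_\p,k) \;=\; \max\bigl\{\rchi(\fr_\p,\bk),\, (k-1) \fr_\p(k)\bigr\},
\]
and the residual bound $(k-1) \fr_\p(k) \le \rchi(\fr_\p,\bk)$ for $k \in [\p+1,\bk]$ is proved by exactly the same downward induction as in part i), with base case $k=\bk$ giving equality.

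The only nontrivial algebraic step is the elementary estimate $(l-1)(l+1) \le l^2$; everything else is bookkeeping. The one item to verify carefully is that the tight equalities for $f^\star_m$ really do hold for every $j \in [m-1]$ --- the short computation from the closed form \eqref{eq:optimalf} outlined above handles this. I do not anticipate any conceptual obstacle beyond this verification.
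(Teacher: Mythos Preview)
Your proposal is correct and follows the same overall structure as the paper's proof: both arguments first establish that the linear constraints $j f(j) - f(j+1)$ are all tight at the value $\rchi$ (via the closed form \eqref{eq:optimalf} or the KKT system \eqref{eq:friskyformula}), so that $\rchi(f,l)$ collapses to $\max\{\rchi,\,(l-1)f(l)\}$, and then bound the residual boundary term $(l-1)f(l)$.

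The only place the arguments diverge is in that last bound. You run a downward induction on $l$, carrying the estimate $(l-1)(l+1)<l^2$. The paper dispatches it in a single line using only monotonicity of $f$: since $f(l)\ge f(l+1)$ one has
\[
(l-1)f(l)\;=\;l f(l)-f(l)\;\le\;l f(l)-f(l+1)\;=\;\rchi\,,
\]
where the final equality is the tight recursion at index $l$ (valid for $l\le m-1$ in part i), and for $l\in[\p,\bk-1]$ in part ii)). This one-step argument applies verbatim to both $f^\star_m$ and $\fr_\p$ and avoids the induction altogether. Your route is perfectly valid, just slightly more work than needed.
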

\begin{proof}
i) If $l=m$, the result holds trivially. Hence in the following we consider $l\in[m-1]$. By definition of $\rchi({f^\star_{m}},l)$ in \eqref{eq:chidef}, one has
 \[
\begin{split}
\rchi({f^\star_{m}},l)=&\min_{x\ge0} x\\
& \text{s.t.}\quad j f^\star_{m}(j)-f^\star_{m}(j+1)\le x\quad j\in[l-1]\,,\\
& \qquad\, (l-1)f^\star_{m}(l)\le x\,.
\end{split}
 \]
Note that $f^\star_{m}$ is derived in \cite[Theorem 2]{Gairing09} solving the following recursion
 \be
 \label{eq:recur}
 \begin{split}
& j f^\star_{m}(j)-f^\star_{m}(j+1)= \rchi({f^\star_{m}},m) \quad   j\in[m-1]\\
&(l-1)f^\star_{m}(l)=\rchi({f^\star_{m}},m)\,.
\end{split}\ee
Since $m>l$, it follows that any feasible $x$ from the LP above has to satisfy $x\ge \rchi({f^\star_{m}},m)$. In the following we show that setting $x=\rchi({f^\star_{m}},m)$, the constraint $(l-1)f^\star_{m}(l)\le x$ is also satisfied. This will be enough to conclude that $\rchi({f^\star_{m}},l)=\rchi({f^\star_{m}},m)$.\\
Since $f^\star_{m}$ is non increasing, one has
\[\begin{split}
&(l-1)f^\star_{m}(l)-\rchi({f^\star_{m}},m) =lf^\star_{m}(l)-f^\star_{m}(l)-\rchi({f^\star_{m}},m)\le\\
&\le lf^\star_{m}(l)-f^\star_{m}(l+1)-\rchi({f^\star_{m}},m) =0 \,,
\end{split}
\]
where the equality holds applying  \eqref{eq:recur} for $j=l\in[m-1]$.

ii) We intend to compute 
 \[
 \begin{split}
\rchi({\fr_\p},k)= &\min_{x\ge0} x\\
& \text{s.t.}\quad j \fr_\p(j)-\fr_\p(j+1)\le x\quad j \in [k-1]\\
& (k-1)\fr_\p(k)\le x\,.
\end{split}
 \]
For any feasible $x$, it must be $x\ge \rchi({\fr_\p},\bk)$ due to how $ \fr_\p(j)$ is recursively defined for $j>\p$ in Equation \eqref{eq:friskyformula}. Similarly to what shown before, one can prove that $x = \rchi({\fr_\p},\bk)$ will also satisfy the constraint $(k-1)\fr_\p(k)\le x$. Hence $\rchi({\fr_\p},k)=\rchi({\fr_\p},\bk)$ and the proof is concluded.
\end{proof}
\begin{lemma}
\label{lemma:keypoints}
For all resources $r\in\mR$, the distribution rules $\falgkinfi$ are such that
\begin{align}
&j\,\falgkinfi(j)\le \rchi({f^\star_{\km}},\km)+1 &&j\in[x^\infty_\ri]
\label{eq:lemmapoint1}
\\
&j\,\falgkinfi(j)-\falgkinfi(\min\{k,j+1\})\le \rchi({f^\star_{\km}},\km)&& j\in[x^\infty_\ri]
\label{eq:lemmapoint2}
\end{align}
where $\km=\max_{r\in\mR}x^\infty_\ri$.
\end{lemma}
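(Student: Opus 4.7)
\bigskip

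The plan is to exploit the piecewise structure of $\falgkinfi$ together with the defining recursion of the optimal distribution rule $f^\star_m$, and then invoke the monotonicity of $\rchi(f^\star_m,m)$ in $m$ to pass from $m = x^\infty_r$ up to $m = \km$.

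First I would recall that by definition \eqref{eq:falg} the distribution rule satisfies $\falgkinfi(j) = f^\star_{x^\infty_r}(j)$ for $j \in [x^\infty_r]$ and $\falgkinfi(j) = f^\star_{x^\infty_r}(x^\infty_r)$ for $j \in [x^\infty_r+1, n]$. From the derivation of the optimal rule in \cite{Gairing09} (also used in the proof of part (i) of Lemma \ref{lemma:redblueineq}), $f^\star_m$ obeys the recursion $j f^\star_m(j) - f^\star_m(j+1) = \rchi(f^\star_m,m)$ for $j \in [m-1]$ together with $(m-1) f^\star_m(m) = \rchi(f^\star_m,m)$. Finally, from Proposition \ref{thm:optimalf}(iii) the optimal price of anarchy is decreasing in $m$, equivalently $\rchi(f^\star_m,m)$ is non-decreasing in $m$, so in particular $\rchi(f^\star_{x^\infty_r}, x^\infty_r) \le \rchi(f^\star_{\km}, \km)$ because $x^\infty_r \le \km$ by definition of $\km$.

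For inequality \eqref{eq:lemmapoint1} I would split on whether $j < x^\infty_r$ or $j = x^\infty_r$. In the first case, the recursion gives $j f^\star_{x^\infty_r}(j) = f^\star_{x^\infty_r}(j+1) + \rchi(f^\star_{x^\infty_r}, x^\infty_r) \le 1 + \rchi(f^\star_{\km},\km)$, using $f^\star_{x^\infty_r}(j+1) \le f^\star_{x^\infty_r}(1) = 1$. In the boundary case $j = x^\infty_r$, I would rewrite $x^\infty_r f^\star_{x^\infty_r}(x^\infty_r) = (x^\infty_r - 1) f^\star_{x^\infty_r}(x^\infty_r) + f^\star_{x^\infty_r}(x^\infty_r) = \rchi(f^\star_{x^\infty_r}, x^\infty_r) + f^\star_{x^\infty_r}(x^\infty_r)$, which is again bounded by $\rchi(f^\star_{\km},\km) + 1$.

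For inequality \eqref{eq:lemmapoint2} the same case split works, keeping track of the $\min\{k,j+1\}$ truncation. If $j \le x^\infty_r - 1$, then $j+1 \le x^\infty_r \le k$, so $\falgkinfi(\min\{k,j+1\}) = f^\star_{x^\infty_r}(j+1)$ and the difference equals $\rchi(f^\star_{x^\infty_r}, x^\infty_r)$ by the recursion. If $j = x^\infty_r$, then irrespective of whether $j+1 \le k$ or $j+1 > k$ (the latter forces $j = k = x^\infty_r$), the truncation/constant-extension gives $\falgkinfi(\min\{k,j+1\}) = f^\star_{x^\infty_r}(x^\infty_r)$, so the difference equals $(x^\infty_r - 1) f^\star_{x^\infty_r}(x^\infty_r) = \rchi(f^\star_{x^\infty_r}, x^\infty_r)$. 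In either case the monotonicity bound closes the argument.

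There is no real obstacle here; the only thing to be slightly careful about is the bookkeeping at the boundary $j = x^\infty_r$, where the recursion $j f^\star_m(j) - f^\star_m(j+1) = \rchi(f^\star_m,m)$ does not directly apply and must be replaced by the other identity $(m-1) f^\star_m(m) = \rchi(f^\star_m,m)$, while simultaneously $\falgkinfi(j+1)$ switches from $f^\star_{x^\infty_r}(j+1)$ to the constant $f^\star_{x^\infty_r}(x^\infty_r)$. The two effects conspire so that the same bound still holds, which is the content of the lemma.
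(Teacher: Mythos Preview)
Your proposal is correct and follows essentially the same route as the paper's proof: the same case split on $j\in[x^\infty_r-1]$ versus $j=x^\infty_r$, the same use of the identities $j f^\star_m(j)-f^\star_m(j+1)=\rchi(f^\star_m,m)$ and $(m-1)f^\star_m(m)=\rchi(f^\star_m,m)$, and the same final monotonicity step $\rchi(f^\star_{x^\infty_r},x^\infty_r)\le\rchi(f^\star_{\km},\km)$. The only cosmetic difference is that the paper phrases the first step via the inequality from the definition of $\rchi$ in \eqref{eq:chidef} rather than the exact recursion, but this changes nothing substantive.
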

\begin{proof}
We start from \eqref{eq:lemmapoint1} and examine $\falgkinfi$ for a fixed $r\in\mR$. Consider $j\in[x^\infty_\ri-1]$, by definition of $\falgkinfi$ and the fact that $f^\star_{x^\infty_\ri}$ is non increasing
\[\begin{split}
&j\,\falgkinfi(j)-1=j\,f^\star_{x^\infty_\ri}(j)-f^\star_{x^\infty_\ri}(1)\\
&\le j\,f^\star_{x^\infty_\ri}(j)-f^\star_{x^\infty_\ri}(j+1)
\le
\rchi({f^\star_{x^\infty_\ri}},{x^\infty_\ri}) \\
&\implies\quad j\,\falgkinfi(j)\le \rchi({f^\star_{x^\infty_\ri}},{x^\infty_\ri}) + 1\,,
\end{split}
\]
where the last inequality holds thanks to the definition \eqref{eq:chidef}.
Since $x^\infty_\ri \le \km$ and the price of anarchy is a decreasing function, one has $\rchi(f^\star_{x^\infty_\ri},{x^\infty_\ri}) \le \rchi({f^\star_{\km}},\km)$ and so $j\,\falgkinfi(j)\le \rchi({f^\star_{\km}},\km)+1$ for $j\in[x^\infty_\ri-1]$.\\
In a similar fashion when $j=x^\infty_\ri$
\be
\label{eq:previous1}
\begin{split}
&x^\infty_\ri\falgkinfi(x^\infty_\ri)-1= x^\infty_\ri f^\star_{x^\infty_\ri}(x^\infty_\ri)-f^\star_{x^\infty_\ri }(1)\\
&\le x^\infty_\ri f^\star_{x^\infty_\ri }(x^\infty_\ri)-f^\star_{x^\infty_\ri }(x^\infty_\ri)\\
&=
(x^\infty_\ri-1)f^\star_{x^\infty_\ri }({x^\infty_\ri})\\
&=\rchi({f^\star_{x^\infty_\ri}},x^\infty_\ri) \le \rchi({f^\star_{\km}},\km)\,,
\end{split}
\ee
where the only difference is in the last equality that comes from equation \eqref{eq:optimalchi}.
Repeating the same reasoning for all $r\in\mR$, one has proven
\eqref{eq:lemmapoint1}.

In the remaining, we show that \eqref{eq:lemmapoint2} holds. Consider $\falgkinfi$ for a fixed resource $r\in\mR$ and recall that 
$x^\infty_\ri\le k$. Thus for $j\in[x^\infty_\ri-1]$ one has $\min\{k,j+1\}=j+1$ and the claim reads as $j\,\falgkinfi(j)-\falgkinfi(j+1)\le \rchi({f^\star_{\km}},\km)$. This holds since
\[
\begin{split}
j\,\falgkinfi(j)-\falgkinfi(j+1)&= 
j\,f^\star_{x^\infty_\ri}(j)-f^\star_{x^\infty_\ri}(j+1)\\
&
\le
\rchi({f^\star_{x^\infty_\ri}},x^\infty_\ri) \le \rchi({f^\star_{\km}},\km)
\end{split}
\]
where the first inequality holds thanks to definition \eqref{eq:chidef} and the last since the price of anarchy is non increasing ($x^\infty_\ri\le \km$). \\
In the remaining we focus on $j=x^\infty_\ri$ and divide the proof in two subparts.
When $k=x^\infty_\ri$, $\min\{k,j+1\}=x^\infty_\ri$ and the claim follows from
\[\begin{split}
&x^\infty_\ri\falgkinfi(x^\infty_\ri)-\falgkinfi(x^\infty_\ri)
=
x^\infty_\ri f^\star_{x^\infty_\ri}({x^\infty_\ri})-f^\star_{x^\infty_\ri}({x^\infty_\ri})\\
&=
(x^\infty_\ri-1)f^\star_{x^\infty_\ri}({x^\infty_\ri})
=
\rchi({f^\star_{x^\infty_\ri}},x^\infty_\ri) \le \rchi({f^\star_{\km}},\km)\,,
\end{split}
\]
similarly to \eqref{eq:previous1}.
When $k>x^\infty_\ri$, then $\min\{k,j+1\}=x^\infty_\ri+1$ and the claim holds if we show 
\[x^\infty_\ri\falgkinfi(x^\infty_\ri)-\falgkinfi(x^\infty_\ri+1)\le
\rchi({f^\star_{x^\infty_\ri}},x^\infty_\ri).\] 
For this to hold, one has to require 
\[\begin{split}
\falgkinfi(x^\infty_\ri+1)&\ge x^\infty_\ri\falgkinfi(x^\infty_\ri)-\rchi({f^\star_{x^\infty_\ri}},x^\infty_\ri)\\
&=x^\infty_\ri f^\star_{x^\infty_\ri}(x^\infty_\ri)-\rchi({f^\star_{x^\infty_\ri}},x^\infty_\ri)\\
&=f^\star_{x^\infty_\ri}(x^\infty_\ri)=\falgkinfi(x^\infty_\ri)\,,
\end{split}
\]
where the second equality sign follows form $\rchi({f^\star_{x^\infty_\ri}},x^\infty_\ri) = (x^\infty_\ri-1)f^\star_{x^\infty_\ri}(x^\infty_\ri)$, that is form equation \eqref{eq:chidef}. Hence we need to impose 
\[
\falgkinfi(x^\infty_\ri+1)\ge\falgkinfi(x^\infty_\ri)\,,
\]
but at the same time we are limited to non increasing distribution rules. Hence we set $\falgkinfi(x^\infty_\ri+1)=\falgkinfi(x^\infty_\ri)$ as by definition of $\falg$ from Equation \eqref{eq:falg}. The proof is completed by observing that the same reasoning can be repeated for any resource $r\in\mR$.
\end{proof}
\bibliographystyle{IEEEtran}
\bibliography{biblio_covering_games}

\begin{IEEEbiography}
[{\includegraphics[width=1in,height=1.25in,clip,keepaspectratio]{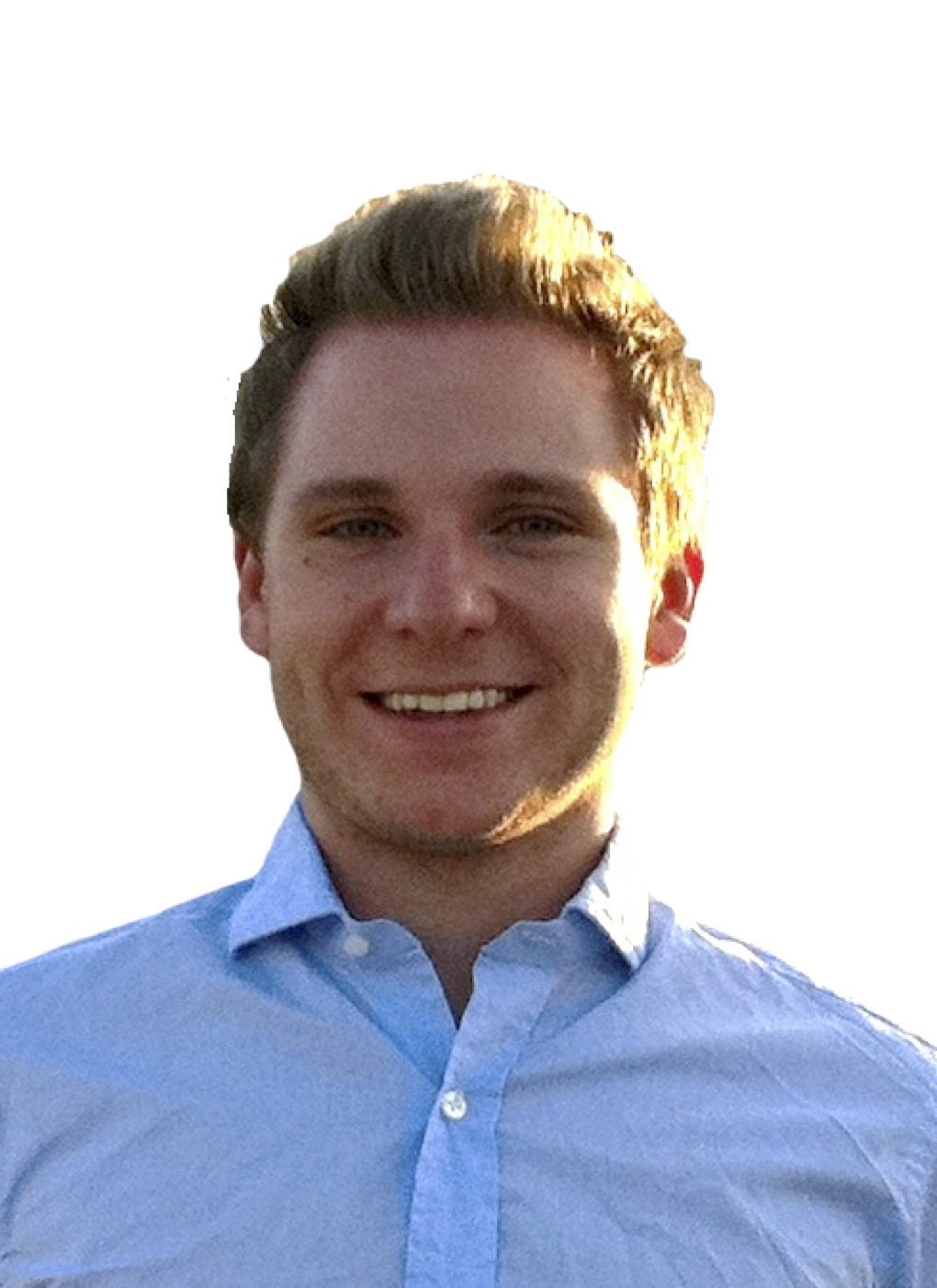}}]
{Dario Paccagnan} is a doctoral student at the Automatic Control Laboratory, ETH Zurich, Switzerland, since October 2014. He received his B.Sc. and M.Sc. in Aerospace Engineering from the University of Padova, Italy, in 2011 and 2014. In the same year he received the M.Sc. in Mathematical Modelling from the Technical University of Denmark, all with Honors. His Master's Thesis was prepared when visiting Imperial College of London, UK, in 2014. From March to August 2017 he has been a visiting scholar at the University of California, Santa Barbara. Dario is recipient of the SNSF fellowship for his work in Distributed Optimization and Game Design. His research interests are at the interface between distributed control and game theory. Applications include multiagent systems, smart cities and traffic networks.
\end{IEEEbiography}

\begin{IEEEbiography}
[{\includegraphics[width=1in,height=1.25in,clip,keepaspectratio]{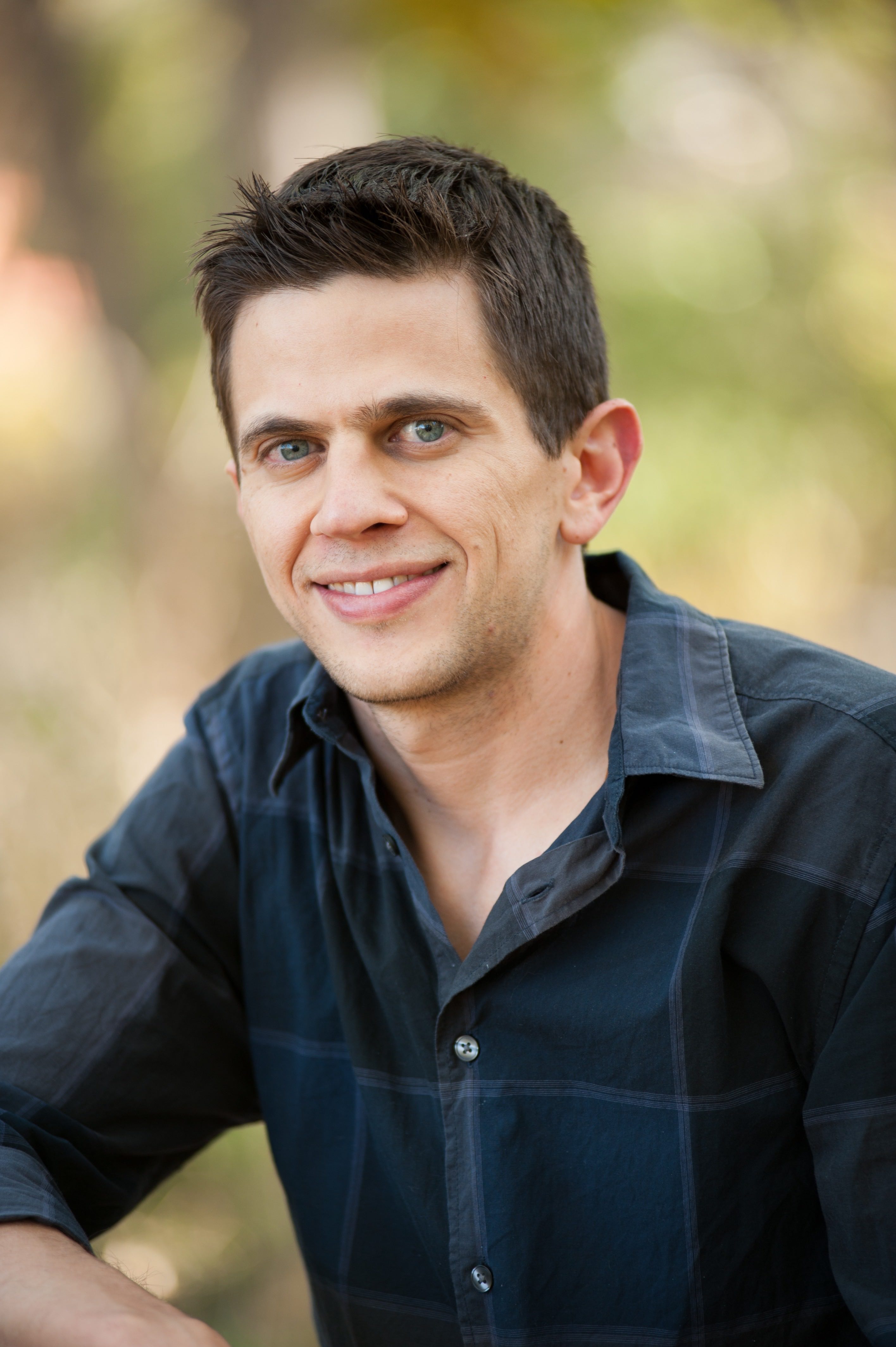}}]
{Jason Marden} is an Associate Professor in the Department of Electrical and Computer Engineering at the University of California, Santa Barbara. Jason received a BS in Mechanical Engineering in 2001 from UCLA, and a PhD in Mechanical Engineering in 2007, also from UCLA, under the supervision of Jeff S. Shamma, where he was awarded the Outstanding Graduating PhD Student in Mechanical Engineering. After graduating from UCLA, he served as a junior fellow in the Social and Information Sciences Laboratory at the California Institute of Technology until 2010 when he joined the University of Colorado. Jason is a recipient of the NSF Career Award (2014), the ONR Young Investigator Award (2015), the AFOSR Young Investigator Award (2012), the American Automatic Control Council Donald P. Eckman Award (2012), and the SIAG/CST Best SICON Paper Prize (2015). Jason's research interests focus on game theoretic methods for the control of distributed multiagent systems.
\end{IEEEbiography}
\vfill 
\end{document}